\newtheorem{theorem}{Theorem}
\newtheorem{lemma}[theorem]{Lemma}
\newtheorem{proposition}[theorem]{Proposition}
\begin{document}

\title{Dealing with seasonal variability and inter-site dependence in regional flood frequency analysis}

%\title{Semi-parametric flood frequency estimation under regional tail homogeneity}

\author{Paul Kinsvater\,\footnote{Corresponding author. Technische Universit\"at Dortmund, Fakult\"at Statistik. {E-mail:} \texttt{kinsvater@statistik.tu-dortmund.de}.}
,~
Friederike Deiters\,
~and 
Roland Fried
}
\date{}

\maketitle

\section*{Abstract}

This paper considers the regional estimation of high quantiles of annual maximal river flow distributions $F$, an important problem from flood frequency analysis. Even though this particular problem has been addressed by many papers, less attention has been payed to incorporating seasonal variability and spatial dependence into the methods. We are going to discuss two regional estimators of high quantiles of local distributions $F$ that are able to deal with these important features, namely, a parametric approach based on so-called two-component extreme value distributions and a semi-parametric approach based on regional estimation of a tail index. The asymptotic normality of the estimators is derived for both procedures, which for instance enables us to account for estimation uncertainty without the need of parametric dependence models or bootstrap procedures. A comprehensive simulation study is conducted and our main findings are illustrated on river flow series from the Mulde basin in Germany, where people have suffered several times from severe floods over the last 100 years.

\section{Introduction}\label{sec:intro}

Flood frequency analysis (FFA) deals with the estimation of river flow distributions. A flow value is the amount of water (in cubic meter per second, m$^3$/s) passing a measurement station. The ultimate objective is to determine the design of future flood protection systems, for example, the height of a dam for some predefined non-failure probability. Common rules found in many official guidelines \citep[e.g.,][]{dwa_merkblatt_2012} focus on the distribution $F(x)=\mathbb{P}(X\leq x)$ of annual maximal flows $X$ at the site of interest and determine the height of a dam according to a high quantile $q=F^{-1}(p)$, for a given $p\in(0,1)$. Often experts have to deal with rather high probabilities $p\geq0.99$, depending on the safety-relevance of the local environment. \\
In practice the distribution $F$ can be estimated from annual maximal flows $X_1,\ldots,X_n$ of the past $n$ years. A serious problem in FFA is that only small sample lengths $n$ are available at each site. For a potential reduction of the estimation uncertainty, so-called regional methods are applied. These procedures combine observations from a set of sites $j\in\{1,2,\ldots,d\}$ sharing similar site characteristics related to the flood magnitude. For instance, the Index Flood method proposed in the seminal work by \citet{Dal60} is based on the hypothesis that the quantile functions are all identical up to an unknown local scale $s_j=s(F_j)>0$,
\begin{align}\label{eq:IFassumption}
\mathcal{H}_{0,\text{IF}}:\ F_j^{-1}=s_j\cdot G_{\theta}^{-1}\quad \forall  j=1,\ldots,d,
\end{align}
with distribution $G_\theta$ known up to a finite-dimensional parameter $\theta$. Thanks to the extreme value theorem \citep{FisTip1928}, a recommended choice for the parametric model $G_\theta$ is a generalized extreme value (GEV) distribution function
\begin{align}\label{eq:GEV}
	G_\theta(x)=\exp \left[ -\left(1+\xi\ \frac{x-\mu}{\sigma}\right)^{-1/\xi} \right] \text{ for } 1+\xi \ \frac{x-\mu}{\sigma}>0,
\end{align}

\begin{figure}[b!]
	\centering
	\includegraphics[width=.85\linewidth]{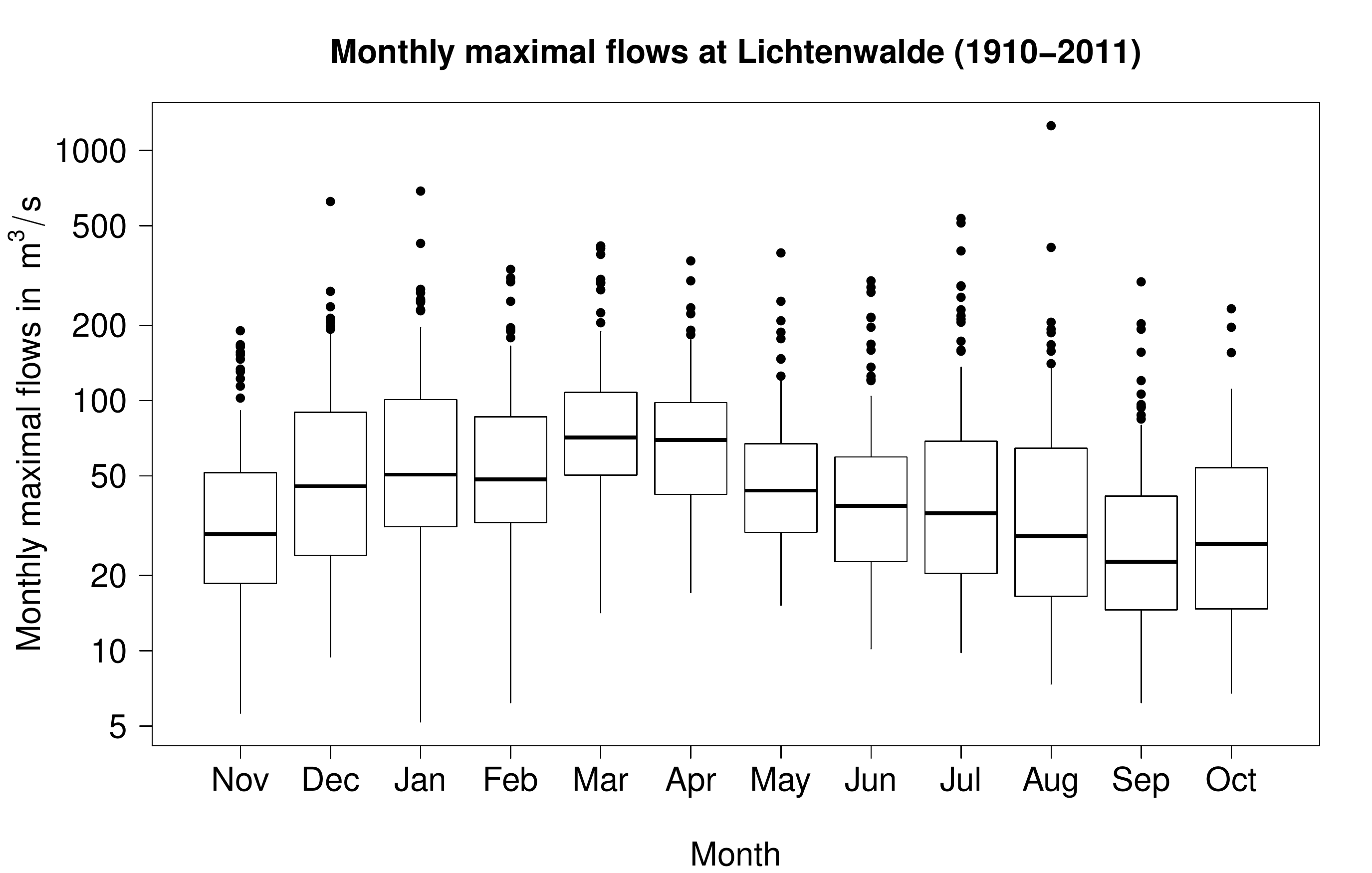}
	\caption{Boxplots of monthly maximal flows at station Lichtenwalde in Saxony, Germany, during the hydrological years 1910--2011. Monthly maximal flows are depicted on the $y$-axis on $\log$-scale.}
	\label{fig:monHQ}
\end{figure}

with parameters $\theta=(\mu,\sigma,\xi)'\in\mathbb{R}\times\mathbb{R}_+\times\mathbb{R}$ called location, scale and shape. These distributions arise as the only possible non-degenerate limit laws of standardized block maxima $X^{(b)}=\max\{Z_1,\ldots,Z_b\}$ over independent and identically distributed variables $Z_i$ as the block size $b$ tends to infinity. \\
However, it is also important to account for inter-site dependence. Besides the fact that stronger positive dependence between stations reduces the accuracy of regional estimators \citep{Ste83}, we need to estimate the correlation between local statistics to be able to consistently estimate the precision of regional approaches \citep{MarSte02,CunBur06}. Contrary to the previous references, we do not apply Monte Carlo or Bootstrap procedures. Instead we use direct estimates of correlation based on recent theory developed by \citet{LilKinFri16}.

\smallskip

Treating an annual maximal flow $X=\max\{Z_{\text{Jan}},Z_{\text{Feb}},\ldots,Z_{\text{Dec}}\}$ as a maximum over twelve independent and identically distributed monthly maximal flows $Z_{\text{month}}$ motivates the assumption that $F$ equals $G_\theta$ approximately for some unknown $\theta=(\mu,\sigma,\xi)'$ due to the finite block length of $b=12$. We even might assume that $F=G_\theta$ holds, which allows to estimate $F$ by parametric methods. A theoretical justification of this simplification is studied by \citet{FerDeh15} for estimation based on $L$-moments and by \citet{BueSeg15} for maximum likelihood estimation, where the asymptotic normality of the methods is derived for block maxima instead of exact GEV distributions (with block size tending to infinity). \\
The former simplification of identically distributed monthly maximal flows $Z_{\text{Jan}},\ldots,Z_{\text{Dec}}$ is not realistic. River flows, similar to temperature and rainfall, are subject to seasonal variability. Flows in the winter/spring season are fed by large masses of melting snow, while floods in the summer/autumn period usually are caused by short but heavy rainfalls. For empirical evidence of the seasonal variability we refer to Figure \ref{fig:monHQ}. The boxplots illustrate the distribution of monthly maximal flows for each specific month at station Lichtenwalde. The plot is computed from observations during the years 1910--2011. A seasonal pattern in location (median) and dispersion (interquartile range) of the boxes becomes apparent. This confirms that different physical mechanisms (melting snow, heavy rainfall) cause different river flow distributions. Summing up, it is not clear whether the assumption $F\approx G_\theta$ is justified, at least approximately, for annual maximal flow distributions. \\
An approach that is particularly designed to account for the different flood generating mechanisms proceeds as follows: Suppose that our observations include seasonal maxima $W$ and $S$ from two disjoint seasons, e.g. hydrological winter and summer, such that $X=\max\{W,S\}$. Both, $W$ and $S$, represent maximal values that are computed from more homogeneous periods of, say, six monthly maximal observations with distribution functions $\mathbb{P}(W\leq x)=F_w(x)$ and $\mathbb{P}(S\leq x)=F_s(x)$. We thus may assume that $F_w\approx G_{\theta_w}$ and $F_s\approx G_{\theta_s}$ for some parameter vectors $\theta_w$ and $\theta_s$. Furthermore, assuming independence between $W$ and $S$, we obtain
\begin{align}\label{eq:productdist}
F(x)=\mathbb{P}(\max\{W,S\}\leq x)=\mathbb{P}(W\leq x~,S\leq x)=F_w(x)\cdot F_s(x)\approx G_{\theta_w}(x)\cdot G_{\theta_s}(x)
\end{align}
for the distribution function $F$ of annual maximal flows. Again, we may assume $F(x)=G_{\theta_w}(x)\cdot G_{\theta_s}(x)$ for practical reasons. A distribution function $F=G_{\theta_w}\cdot G_{\theta_s}$ is called two-component GEV \citep{RosFioVer84,GabArn91,StrKoc12,RulBuiRotKys16}. \\
In the mentioned references it is stated that such a seasonal approach potentially leads to more plausible estimates of high quantiles of $F$, where plausibility may be interpreted as a smaller number of observations being classified as outliers \citep{RosFioVer84}. On the other hand, seasonal GEV estimation means that maxima over blocks with a smaller size are considered, say, $b=6$ instead of $b=12$. Is it still plausible to assume that $F_w\approx G_{\theta_w}$ and $F_s\approx G_{\theta_s}$ are GEV distributions, even though the block size $b$ is small and still there might be some minor variability left within each season? If not, what kind of model should we choose instead? \\
As a compromise between parametric and purely non-parametric approaches we also present a regional estimator of high quantiles under semi-parametric model assumptions based on estimation of a regional extreme value index \citep{KinFriLil15}. For that purpose we are taking things a step backward: We avoid the GEV assumption by ignoring the fact that our observations $X$ are maximal values and simply treat them as realizations from any heavy-tailed distribution $F$. Instead of fitting the whole distribution $F$, we only focus on the estimation of an approximately parametric right tail. Similar to the two-component GEV model, this semi-parametric procedure leads us to more plausible estimates of high quantiles, since solely the largest observations from $F$ are taken effectively into account. \\
Our main results can be summarized as follows:
\begin{enumerate}
\item \textbf{Theory:} We discuss two new approaches for joint estimation of high quantiles of annual maximal flow distributions from a homogeneous group of $d$ stations: An estimator based on the two-component GEV model \citep{RulBuiRotKys16} and one based on semi-parametric assumptions. We derive asymptotic theory for both approaches under a very general framework (inter-site dependence, site-varying sample lengths) in order to be able to evaluate the estimation uncertainty of the procedures. This, for instance, leads to asymptotically optimal weights in the computation of regional estimates.
\item \textbf{Simulation:} Accounting for seasonality improves the estimation efficiency of high quantile estimators. If seasonal maximal flows, say, $W$ and $S$, are available, we recommend the parametric approach based on two-component distributions. Otherwise, if only annual data $X$ is available, the semi-parametric estimator still is rather competitive, provided the dimension $d$ of the homogeneous group is sufficiently large $(d\geq 5)$.
\item \textbf{Application:} We consider river flows from a homogeneous group of $d=8$ stations from the Mulde river basin, where winter and summer flows have considerably different tails. Our two-component GEV and semi-parametric procedures are able to deal with the seasonal variability, leading to more plausible estimates of high quantiles. \\
The severe flood in the summer of 2002 plays an outstanding role in our data set. The corresponding empirical return period/level combination is the only point in Figure \ref{fig:application} that is far away from the estimated curves. A classical GEV approach suggests that an event of at least the strength of 2002 occurs every 400 years on average. On the other hand, our two-component GEV approach suggests a return period of around 200 years, implying much higher risk of severe floods.
\end{enumerate}
The rest of the paper is organized as follows. We begin with a short description of the data in Section \ref{sec:data}. The next two sections introduce the models and related statistical methods, starting with two-component GEV distributions in Section \ref{sec:parametric} and followed by semi-parametric inference in Section \ref{sec:semi-parametric}. Section \ref{sec:sim} reports a simulation study and in Section \ref{sec:application} we apply the methods to maximal flows from several river stations located at the Mulde basin in Germany. Technical details and proofs are deferred to Appendix~\ref{sec:PWM} and \ref{sec:theory}.

\section{Data}\label{sec:data}

Regional frequency analysis (RFA) combines observations from many sites within a region, even though the primary interest is still the estimation of quantiles at one target site. \\
In our application, see Section \ref{sec:application}, we will have a look at a dataset which contains monthly maximal flows (in m$^3$/s) from $d=8$ stations located in the Mulde river basin (Saxony, Germany). The sample lengths of these sites range from 83 to 102 hydrological years (in Germany from November to October). The two seasons arise by splitting the hydrological year into the hydrological winter from November to April and summer from May to October; so each season contains 6 months. These two seasons are defined according to official guidelines \citep{dwa_merkblatt_2012}. The winter and summer maximal flows can be regarded as independent for any specific site, but simultaneous observations at sites from the same area are dependent. Within a site our data is complete, so we do not have to deal with missing values here. The structure of our data is described as follows: \\
Suppose that the random vector $\mathbf{X}=(X_1,\ldots,X_d)'$ represents annual maximal river flows at the $d$ stations, for any generic year. We denote the local distribution of site $j$ by $F_j(x)=\mathbb{P}(X_j\leq x)$. Unlike in many other regional approaches, we will not consider the components as independent. Let $\mathbf{X}_i=(X_{i,1},\ldots,X_{i,d})'$, with time index $i=1,\ldots,n$, be independent copies of $\mathbf{X}$, where $\{1,\ldots,n\}$ covers the observation period. 
In regional settings it is unlikely that the observation period is the same for all $d$ sites. Often the beginning of recordings is different for different sites because of the different ages of the measurement stations. If we arrange these $d$ sites according to their local sample lengths $n_j$, $j=1,\ldots,d$, such that $n=n_1\geq n_2\geq\ldots\geq n_d$, we obtain the observation scheme
\begin{align}\label{eq:sample} 
\left.\begin{array}{r}
X_{1,1},\ X_{2,1},\ X_{3,1}, \ X_{4,1}, \ X_{5,1}, \ \ldots,\ X_{n,1}~\sim F_1 \\
X_{a_2+1,2},\ X_{a_2+2,2},\  X_{a_2+3,2}, \ \ldots,\ X_{n,2}~\sim F_2 \\
\ddots\; \; \; \; \; \; \; \; \; \; \; \; \; \; \; \; \vdots \hspace{1.3cm}\\
X_{a_d+1,d},\ X_{a_d+2,d},\ \ldots,\ X_{n,d}~\sim F_d
\end{array}\right\}~\begin{array}{c}
\text{observations} \\
\text{from}~d~\text{sites}
\end{array}
\end{align}

where $a_j \in \{0,\ldots,n-1\}$ with $0=a_1\leq a_2 \leq \ldots \leq a_d$ and the local sample lengths $n_j=n-a_j$ for $j=1,\ldots,d$. Each row $X_{i,j}$, $i=a_j+1,\ldots,n,$ contains solely observations of one site $j\in\{1,\ldots,d\}$. It is important to consider scheme \eqref{eq:sample} in order to be able to take into account the dependence between local estimates computed from the scheme. \\
In fact, each of the observations from the previous scheme can be considered as a maximum $X_{i,j}=\max\{W_{i,j},S_{i,j}\}$ of corresponding winter and summer maximal flows, which is also relevant for the data set analyzed in Section \ref{sec:application}. That means, we have a scheme as in \eqref{eq:sample} for winter and summer maximal flows and combine them to the scheme of the annual maximal flows in \eqref{eq:sample}.

\section{Parametric estimation under seasonal variability}\label{sec:parametric}

In this section we assume that the available data consists of winter and summer maximal flows $W$ and $S$, with $W$ and $S$ being stochastically independent. Our focus is on estimation of high quantiles $q_p=F^{-1}(p)$ of the distribution $F(x)=\mathbb{P}(X\leq x)$ of an annual maximum $X=\max\{W,S\}$. By independence of $W$ and $S$, we obtain a factorization
\begin{align*}
F(x)=\mathbb{P}(\max\{W,S\}\leq x)=\mathbb{P}(W\leq x,~S\leq x)=\mathbb{P}(W\leq x)\cdot \mathbb{P}(S\leq x)
\end{align*}
and, because both $W$ and $S$ are maxima, we will also assume that these two variables are GEV distributed. We set $\mathbb{P}(W\leq x)=G_{\theta_w}(x)$ and $\mathbb{P}(S\leq x)=G_{\theta_s}(x)$ for some unknown seasonal parameters $\theta_w,\theta_s$ and with $G_\theta$ defined in \eqref{eq:GEV}. A distribution function $F=G_{\theta_w}\cdot G_{\theta_s}$ is called two-component GEV.

\subsection{Some remarks on two-component GEVs}

The parameters $\theta=(\mu,\sigma,\xi)$ of the GEV model $G_\theta$ defined in \eqref{eq:GEV} represent location, scale and shape, respectively. These distributions are unimodal and the shape $\xi\in\mathbb{R}$ determines the right tail behavior of $G_\theta$. The right tail is bounded, of exponential or of polynomial order for $\xi<0$, $\xi=0$ or $\xi>0$, respectively. The heaviness of the right tail, which may be considered as the risk of extraordinary large realizations from $G_\theta$, increases with increasing $\xi$. In extreme value theory this shape is also called extreme value index (EVI), denoted by $\gamma=\xi$. More generally, an EVI $\gamma$ can be determined for any distribution $F$ in the maximum domain of attraction. For instance, the normal distribution with its exponential tail behavior has an EVI of $\gamma=0$. Students $t_\nu$ and Fishers $F_{m,k}$ have a polynomial right tail with EVI $\gamma=1/\nu>0$ and $\gamma=2/k>0$, respectively. A more comprehensive list of parametric models and corresponding EVIs can be found in \citet[Chap. 2]{beirlant_statistics_2006}. \\
In case of positive shapes $\xi_w,\xi_s>0$, which is likely to hold in our data application, the EVI of $F=G_{\theta_w}\cdot G_{\theta_s}$ is given by $\gamma=\max\{\xi_w,\xi_s\}$, i.e., the right tail behavior of $F$ is determined by that of the heaviest component \citep[Cor. 1.11]{Kin16}. %\\
%Practitioners usually are less familiar with two-component GEVs $F=G_{\theta_w}\cdot G_{\theta_s}$. At least, we tried many different combinations of parameter values for $\theta_w,\theta_s$. By plotting the corresponding densities, we always observed unimodality. We thus presume that every two-component GEV is unimodal.

\smallskip

To fix ideas, suppose that our annual maximum $X$ follows a two-component GEV $F=G_{\theta_w}\cdot G_{\theta_s}$ with seasonal parameters $\theta_w=(\mu_w,\sigma_w,\xi_w)=(2,1,0.2)$ and $\theta_s=(\mu_s,\sigma_s,\xi_s)=(1.5,1,0.4)$. The corresponding density is depicted in Figure \ref{fig:dplot}. In practice we do not know the true model, so it might be preferable to forget seasonal variability and use instead a usual one-component GEV model $G_\theta$. Interestingly enough, by computing the minimizer
\begin{align*}
\theta^*=(\mu^*,\sigma^*,\xi^*)=\underset{\mu,\xi\in\mathbb{R},\sigma>0}{\operatorname{arg}\operatorname{min}}d_{KL}(F,G_\theta)=(2.554,~1.235,~0.305)
\end{align*}
\begin{figure}[h!]
	\centering
	\includegraphics[width=.85\linewidth]{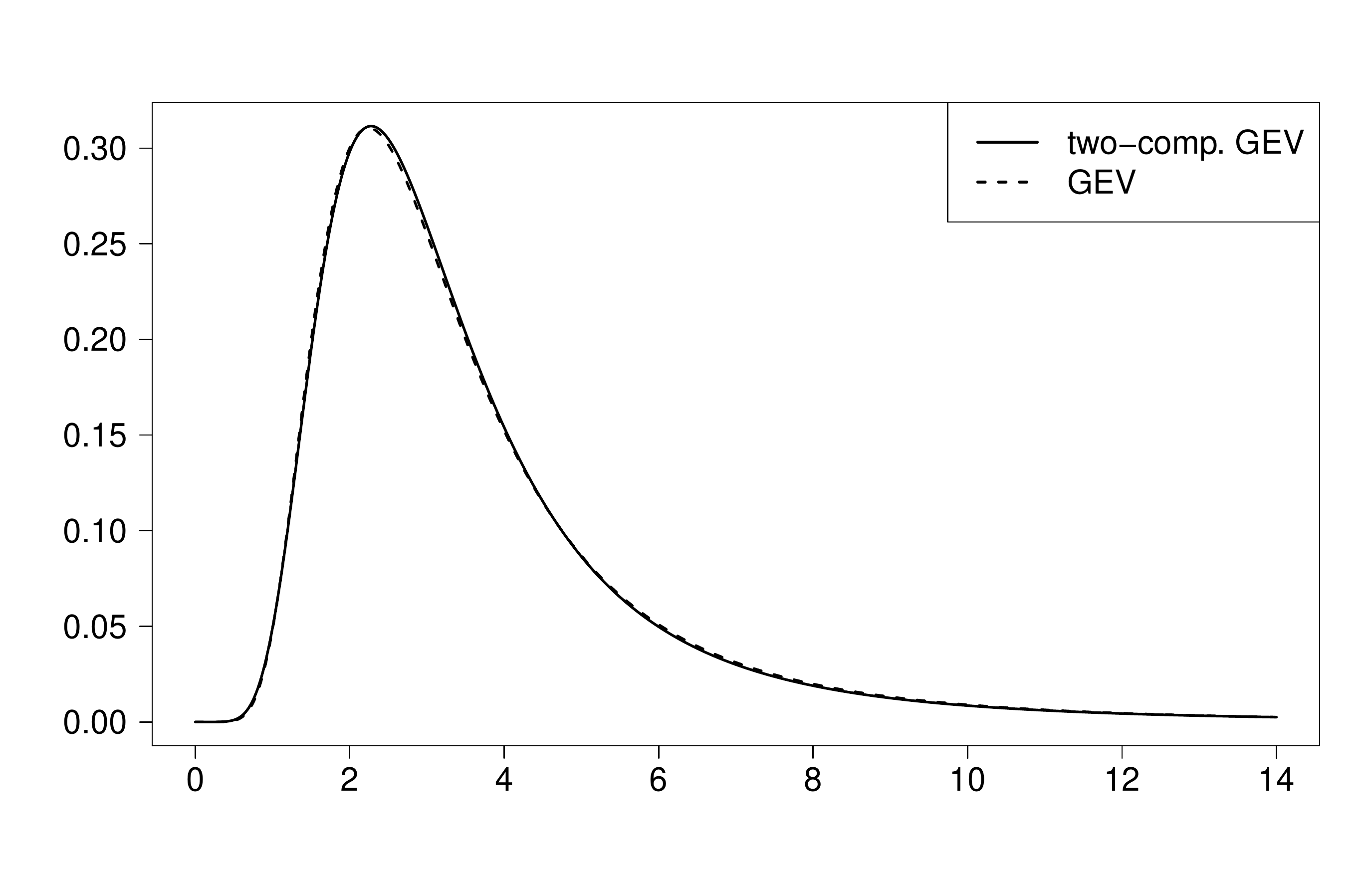}
	\caption{Density of a two-component GEV $F$ with $\theta_w=(2,1,0.2)'$ and $\theta_s=(1.5, 1, 0.4)'$, and that of a GEV with $\theta^*=(2.554, 1.235,0.305)'$ minimizing the Kullback-Leibler distance to $F$.}
	\label{fig:dplot}
\end{figure}
of the Kullback-Leibler distance $d_{KL}$ between $F$ and $G_\theta$, we obtain a GEV depicted in Figure \ref{fig:dplot} that looks almost identical to the density of $F$. Since we are mainly interested in high quantiles, we also compare these two models by their return level curves in Figure \ref{fig:rlplot}. From this second point of view we can clearly identify that these two models deviate reasonably from each other in the relevant tail region. The GEV model $G_{\theta^*}$ underestimates the true EVI of $\gamma=\max\{0.2,0.4\}=0.4$ by $\xi^*=0.305$, which results in underestimating risk of extreme realizations. A similar conclusion is drawn in our simulation study in Section \ref{sec:sim}.
\begin{figure}[h!]
	\centering
	\includegraphics[width=.85\linewidth]{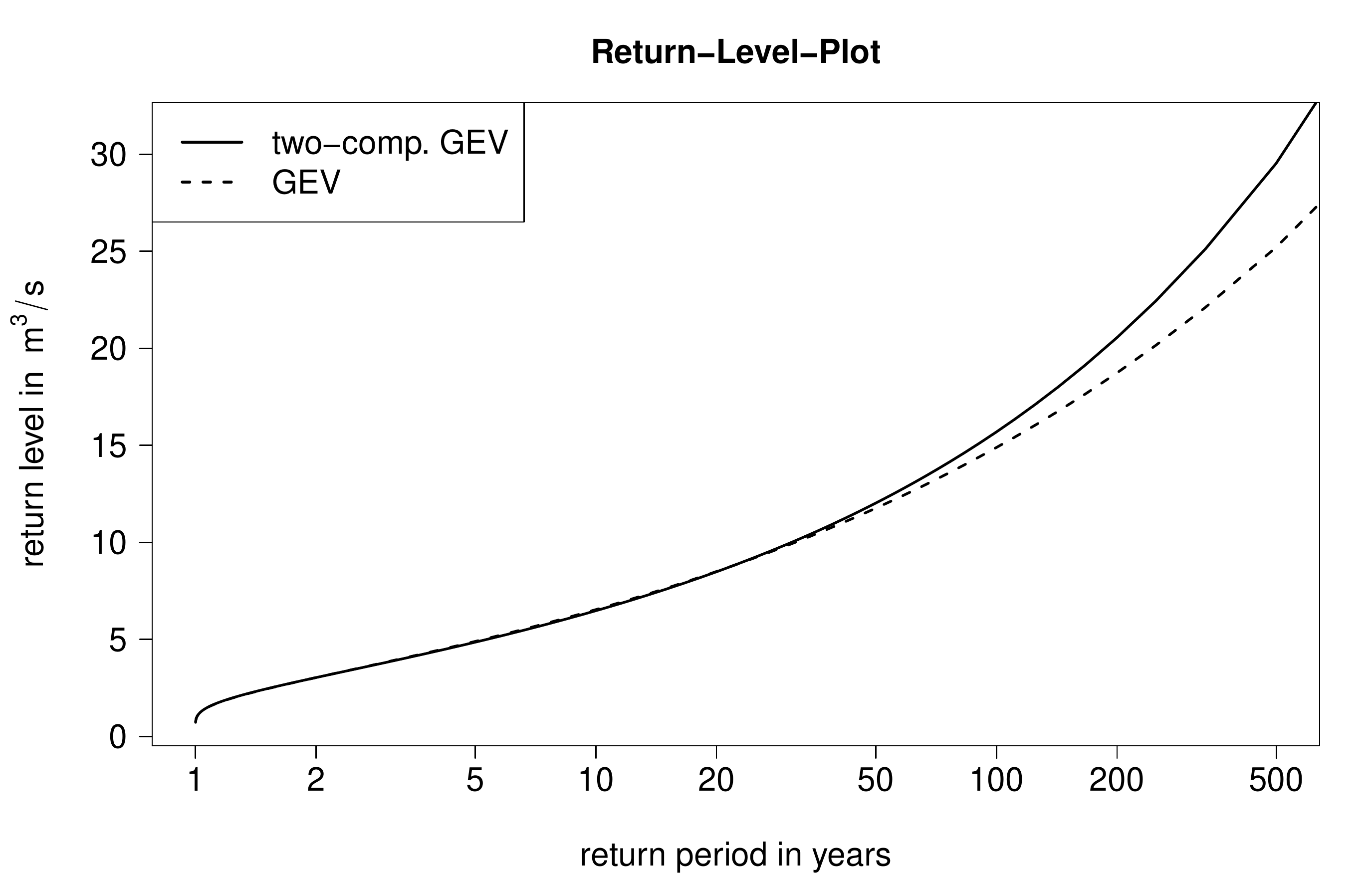}
	\caption{Return level plot of a two-component GEV $F$ with $\theta_w=(2,1,0.2)'$ and $\theta_s=(1.5, 1, 0.4)'$, and that of a GEV with $\theta^*=(2.554, 1.235,0.305)'$ minimizing the Kullback-Leibler distance to $F$.}
	\label{fig:rlplot}
\end{figure}

\subsection{Accuracy of quantile estimates}

Let $F=G_{\theta_w}\cdot G_{\theta_s}$ be a two-component GEV distribution, where $\theta_w,\theta_s$ are two unknown parameter vectors and where $G_\theta$ is defined in \eqref{eq:GEV}. For fixed $p\in(0,1)$ we consider the problem of estimating
\begin{align}
q_p=F^{-1}(p)~\text{ by }~\hat{q}_p=\hat{q}_{p,n}=\left(G_{\hat{\theta}_w}\cdot G_{\hat{\theta}_s}\right)^{-1}(p),
\end{align}
where $\hat{\theta}_w$ and $\hat{\theta}_s$ are estimators of $\theta_w$ and $\theta_s$, respectively. It should be noted that the equation $q=\varphi(p)=\left(G_{\theta_w}\cdot G_{\theta_s}\right)^{-1}(p)$ does not have an explicit solution, except in case of $\theta_w=\theta_s$. Since $\varphi$ is monotone increasing, it is easy to obtain an accurate numerical solution. However, a non-trivial difficulty arises when one is not only interested in a point-estimate $\hat{q}_p$ but also in its precision or in a confidence interval. \\
Suppose that $\hat{\theta}_w=\hat{\theta}_{w,n}$ and $\hat{\theta}_s=\hat{\theta}_{s,n}$ are two independent, asymptotically normal and consistent estimators with asymptotic covariance matrices
\begin{align*}
\Sigma_w=\lim_{n\rightarrow\infty}{\rm Var}\left[\sqrt{n}\big(\hat{\theta}_w-\theta_w\big)\right]~\text{ and }~\Sigma_s=\lim_{n\rightarrow\infty}{\rm Var}\left[\sqrt{n}\big(\hat{\theta}_s-\theta_s\big)\right].
\end{align*}
In our application we consider the regional TL-moment estimators from Appendix \ref{sec:pwm}, with $\hat{\theta}_w$ computed from winter and $\hat{\theta}_s$ from summer maximal flows. The corresponding theory, the asymptotic normality of the regional TL-moment approaches, has been proved in a recent paper by \citet{LilKinFri16}. \\
In this paper we prove that $\hat{q}_p$ is an asymptotically normal and consistent estimator of $q_p$ (see Appendix \ref{sec:seasonal_parametric}) with limiting variance
\begin{align}\label{eq:two_comp_var}
\sigma^2_p=\lim_{n\rightarrow\infty}{\rm Var}\left[\sqrt{n}\big(\hat{q}_p-q_p\big)\right]=\frac{G^2_{\theta_s}(q_p)\cdot J_{\theta_w}(q_p)\Sigma_w J_{\theta_w}(q_p)'+G^2_{\theta_w}(q_p)\cdot J_{\theta_s}(q_p)\Sigma_s J_{\theta_s}(q_p)'}{\left[g_{\theta_w}(q_p)\cdot G_{\theta_s}(q_p)+G_{\theta_w}(q_p)\cdot g_{\theta_s}(q_p)\right]^2},
\end{align}
where $g_\theta$ is the density of $G_\theta$ and $J_\theta(x)=\frac{\partial}{\partial \theta}G_\theta(x)\in\mathbb{R}^{1\times 3}$ is the Jacobi matrix of $G_\theta(x)$ in $\theta$. \\
In practice all unknown quantities on the right-hand side of \eqref{eq:two_comp_var} can be replaced by their sample counterparts in order to obtain an estimate $\hat{\sigma}^2_p$ of $\sigma^2_p$. This, in turn, is used to calculate an asymptotically valid $(1-\alpha)$-confidence interval
\begin{align}\label{eq:confinterval_two-comp}
I_{1-\alpha}=\left[\hat{q}_p - z_{1-\alpha/2}\cdot\frac{\hat{\sigma}_p}{\sqrt{n}},~\hat{q}_p + z_{1-\alpha/2}\cdot\frac{\hat{\sigma}_p}{\sqrt{n}}\right]
\end{align}
for $\hat{q}_p$ and $\alpha\in(0,1)$, where $z_{1-\alpha/2}$ denotes the $(1-\alpha/2)$-quantile of $\mathcal{N}(0,1)$.

\section{Semi-parametric estimation of heavy tails}\label{sec:semi-parametric}

Contrary to the previous section, we only require annual observations $X$ for the computation of estimates in this section. In addition, we will not assume that the annual distribution $F$ of $X$ is determined by a finite number of parameters. Instead we apply a semi-parametric framework under a certain heavy-tail assumption: A distribution function $F$ with right-unlimited support is called a Pareto-type distribution function with extreme value index $\gamma>0$, if
\begin{align}\label{eq:paretotypedist}
\bar{F}(x)=1-F(x)=x^{-1/\gamma}\cdot L(x),\ x>0,
\end{align}
for some (measurable) function $L:\mathbb{R}_+\rightarrow\mathbb{R}_+$ satisfying $L(tx)/L(t)\rightarrow1$ for $t\rightarrow\infty$ and all $x>0$. The characterization in \eqref{eq:paretotypedist} is called semi-parametric, since the parameter $\gamma>0$ is of core interest, while the distribution cannot be characterized by a finite-dimensional parameter vector due to the presence of the function $L$. Typical textbook examples of $L$ are $L(x)=\log(1+x)$ or simply a constant function $L\equiv c>0$. Roughly speaking, slowly varying functions can be considered as almost constant $L(x)\approx c$ for values $x>u$ above some large threshold $u$, which implies that the right tail of $F$ behaves like a power function, $\bar{F}(x)\approx c\cdot x^{-1/\gamma}$ for $x>u$. Some examples of Pareto-type distributions with extreme value index $\gamma>0$ are Student's $t_\nu$ with $\gamma=1/\nu$, Fisher's $F_{m,k}$ with $\gamma=2/k$, the $\log$-Gamma$(\lambda,\alpha)$ with $\gamma=1/\lambda$, the GEV$(\mu,\sigma,\xi)$ and the generalized Pareto GP$(\sigma,\xi)$ with $\gamma=\xi>0$. \\
Recall that our interest is in estimation of some high quantile $q_p=F^{-1}(p)$, where we often have to deal with $p\geq1-1/n$. The Pareto-type framework allows us to estimate $F$ beyond the range of observations without relying on a parametric model: Let $U(t)=F^{-1}(1-1/t)$, $t>1$, denote the $t$-return level of $F$. It is known that Pareto-type distributions $F$ with extreme value index $\gamma>0$ satisfy
\begin{align}\label{eq:Uchar}
\lim_{t\rightarrow\infty}\frac{U(tx)}{U(t)} = x^\gamma~\text{ for all }~x>0.
\end{align}
We thus have $U(tx)\approx U(t)\cdot x^\gamma$ for large $t$, which means that a high quantile $q_p=U(tx)$ is approximately determined by a moderate quantile $U(t)$ and the index $\gamma$. \\
In practice one usually sets $t<n$, with $n$ denoting the sample length, such that $U(t)$ is estimated with satisfactory precision from a simple order statistic. The difficulty remains particularly with the estimation of $\gamma$.

\subsection{Semi-parametric Index Flood}\label{sec:semi-IF}

Recently, \citet{DemCle15} and \citet{KinFriLil15} studied the Hill estimator of $\gamma$ \citep{Hil75} under a regional tail homogeneity assumption. Their theory allows us to introduce a new approach what we call a \textsl{Semi-parametric Index Flood} procedure. We will use the regional estimator of $\gamma$ from the previous references and plug it in into the so-called Weissman extrapolation formula \citep{Wei78} for semi-parametric estimation of high quantiles. For readers who are not familiar with semi-parametric heavy-tail analysis, Hill's estimator and Weissman's extrapolation formula, we recommend to read the introduction into these topics provided in Appendix \ref{sec:unimath} first. \\
Suppose that our data consists of annual maximal river flows from $d$ different sites as in \eqref{eq:sample} and that we are interested in the estimation of, say, $F^{-1}_1(p)$. We assume that each marginal distribution is of Pareto-type,
\begin{align}\label{eq:pareto-type-regional}
F_j(x)=\mathbb{P}(X_{i,j}\leq x)=1-x^{-1/\gamma_j}L_j(x),~x>0,~j=1,\ldots,d,
\end{align}
with not necessarily identical slowly varying functions $L_j$ but with the same unknown extreme value index $\gamma>0$, that is,
\begin{align}\label{eq:tailhomogeneity}
\mathcal{H}_{0,\text{evi}}:\ \gamma_1=\ldots=\gamma_d=\gamma\ \text{ for some }\ \gamma>0.
\end{align}
We call assumption \eqref{eq:tailhomogeneity} heavy-tail homogeneity. Note that $\mathcal{H}_{0,\text{evi}}$ in combination with the Pareto-type framework \eqref{eq:pareto-type-regional} is much weaker than the Index Flood assumption stated in \eqref{eq:IFassumption}. We do not require the distributions $F_j$ to be parametric and we only need equality in EVI $\gamma$, which in the GEV framework means that we only need that the shape parameters $\xi_1=\ldots=\xi_d=\gamma>0$ are identical and positive. \\
Let $\hat{u}_j$ denote the $(n_j-k_j)$-th largest observation and $\hat{\gamma}_j$ denote Hill's estimator computed from the $k_j$ relative excesses above the threshold $u_j$ from sample $X_{a_j+1,j},X_{a_j+2,j},\ldots,X_{n,j}$, $k_j \in \{2,\ldots,n_j \}$.  Taking advantage of assumption \eqref{eq:tailhomogeneity}, we set
\begin{align}\label{eq:jointquantile}
\hat{F}_1^{-1}(p)=\hat{u}_1\cdot \left(\frac{k_1}{n_1(1-p)}\right)^{\hat{\gamma}(\bm{w})}~\text{ with }~\hat{\gamma}(\bm{w})=\sum_{j=1}^dw_j\hat{\gamma}_j,
\end{align}
where $\bm{w}=(w_1,\ldots,w_d)'$ is a vector of weights summing up to 1. The left-hand side of \eqref{eq:jointquantile}, also known as Weissman's extrapolation formula \citep{Wei78}, is motivated by \eqref{eq:Uchar} with $t=n_1/k_1$ and $x=k_1/(n_1(1-p))$. \\
In conformity with equation \eqref{eq:IFassumption}, the estimator \eqref{eq:jointquantile} can be viewed as a semi-parametric Index Flood approach, with site-specific scales $\hat{u}_j$ and regional tail behavior determined by $\hat{\gamma}$. The probability of large events $x>\hat{u}_1$ is estimated by the inversion of \eqref{eq:jointquantile},
\begin{eqnarray}\label{eq:tailapproxireg}
\hat{F}_1(x)=1-\frac{k_1}{n_1}\cdot \left(\frac{x}{\hat{u}_1}\right)^{-1/\hat{\gamma}(\mathbf{w})}.
\end{eqnarray}

\subsection{Selection of weights and tail samples}\label{sec:select_k_and_w}

We recap the rules considered in \citet{KinFriLil15} and begin with the selection of weights $\bm{w}$. For the moment, let $\mathbf{k}=(k_1,\ldots,k_d)'$ be a fixed vector of integers. The joint estimator $\hat{\gamma}(\bm{w})$ on the right-hand side of \eqref{eq:jointquantile} is asymptotically normal with
\begin{align}
{\rm Var}\left[\hat{\gamma}(\bm{w})\right]\approx \frac{\gamma^2}{k_1}\bm{w}'\Sigma\bm{w} \nonumber
\end{align}
and $\Sigma\in\mathbb{R}^{d\times d}$ defined in Appendix \ref{sec:semi_theory}. Using Lagrange multipliers it is straightforward to show that
\begin{align}
\bm{w}_{\text{opt}}=\underset{\bm{w}\in\mathbf{W}}{\operatorname{arg}\operatorname{min}}\frac{\gamma^2}{k_1}\bm{w}'\Sigma\bm{w}=\left(\mathbf{1}'\Sigma^{-1}\mathbf{1}\right)^{-1}\cdot\Sigma^{-1}\mathbf{1},
\end{align}
where $\mathbf{1}=(1,\ldots,1)'\in\mathbb{R}^d$ and $\mathbf{W}$ is the set of all vectors in $\mathbb{R}^d$ with components summing up to 1. Thanks to Proposition \ref{prop:weissmanconverg}, $\bm{w}_{\text{opt}}$ is also the minimizer of the limiting variance of the high quantile estimator from \eqref{eq:tailapproxireg}. \\
Let us now turn to the selection of $\mathbf{k}$. The numbers $k_j$, $j=1,\ldots,d$, represent the local tail sample lengths that are effectively used in the estimation of the common EVI $\gamma$. These numbers typically are small relative to the full sample lengths so that the Hill estimator and related procedures should be applied only in data-rich situations. Since our local sample lengths $n_j$ are rather short, we counteract this local drawback by combining many different stations from a region. \\
The fact that we deal with block maxima from $d$ different stations motivates us to set $k_j=\lfloor 2n_j^{2/3}/d^{1/3}\rfloor$ \citep[Sec. 3.2]{KinFriLil15}. Note that the ratio $k_j/n_j$ is getting smaller with increasing local sample size $n_j$. This is needed for the consistency of Hill's estimator. Accounting for the dimension $d$ in $k_j$ turns out to be important in our data applications. We are able to reduce a typically dominant bias by taking advantage of a large dimension $d$.

\section{Simulation study}\label{sec:sim}

We choose two numbers of sites $d\in\{5,10\}$ in order to consider common regional settings. For simplicity, we assume that all $d$ sites have the same observation period and hence the same sample lengths $n\in\{50,100\}$, which means that $a_j=0$ in \eqref{eq:sample} for all $j=1,\ldots,d$, even though our theory is able to deal with the general case of different sample lengths. The focus lies on the (regional) estimation of a high quantile $q_p=F_j^{-1}(p)$ for probability $p\in\{0.99,0.999\}$ at some site, say, $j=1$.  \\
We apply Khoudraji's device \citep{Kho95} in order to generate an inter-site dependence model. We set
\begin{align}
C_{\bm{\theta},\mathbf{c}}=C_{\theta_1}(\mathbf{u}^\mathbf{c})\cdot C_{\theta_2}(\mathbf{u}^{1-\mathbf{c}}),~ \bm{\theta}=(\theta_1,\theta_2)\in[1,\infty)^2,~\mathbf{c}\in[0,1]^d,
\end{align}
where $C_\theta$ denotes the $d$-dimensional Gumbel-Hougaard copula, $\bm{\theta}=(1.5, 2.5)'$ accounts for strength of dependence and $\mathbf{c}=1/d \cdot(0,1,2,\ldots,d-1)'$ for the amount of asymmetry. This model is simple to handle but still allows for some sort of dependence asymmetry. The latter being a typical feature of discharge data. \\
We compare the following high quantile estimators computed from annual maxima $X$:
\begin{enumerate}
	\item[-] Weissman's estimator \eqref{eq:jointquantile} with $\mathbf{k}$ and $\bm{w}$ according to Section \ref{sec:select_k_and_w}, \hfill (W)
	\item[-] $L$-moment estimator from Appendix \ref{sec:PWM} and \hfill (L)
	\item[-] Trimmed $L$-moment estimator from Appendix \ref{sec:PWM}.\hfill (TL)
\end{enumerate}
(L) and (TL) are improved regional estimators based on the method of probability weighted moments proposed by \citet{LilKinFri16}. After combining the estimation methods, the sample lengths and the number of sites, we get different scenarios to look at. Each scenario is replicated 5000 times. 

\smallskip

Recall that in the introduction we identified two potential sources of deviations of $F$ from a GEV distribution: Finite block length $b$ and seasonal variability in general. The next two subsections are devoted to each of these sources.
%We replicate each szenario $\Lambda=5000$ times. 
%To evaluate how good the different methods (W), (L) and (tL) work on the estimation of the 99\%-quantile, we calculate the \linebreak
%$\text{RMSE}= \sqrt{\frac{1}{\Lambda}\sum_{i=1}^{\Lambda} (\hat q_i - q)^2}$, where $\hat q_i$ is the estimated  and $q$ the true 99\%-quantile.\\

%The structure of the simulated data is shown in \eqref{eq:sample}, Section \ref{sec:data}. In addition, we consider different scenarios for the margins $F_j$, starting with componentwise block maxima in the next Section \ref{subsec:blockmaxima} and followed by two-component GEV's in Section \ref{sec:sim_product}.

\subsection{Block maxima observations}

\begin{figure}[t!]
	\centering
	\includegraphics[width=.85\linewidth]{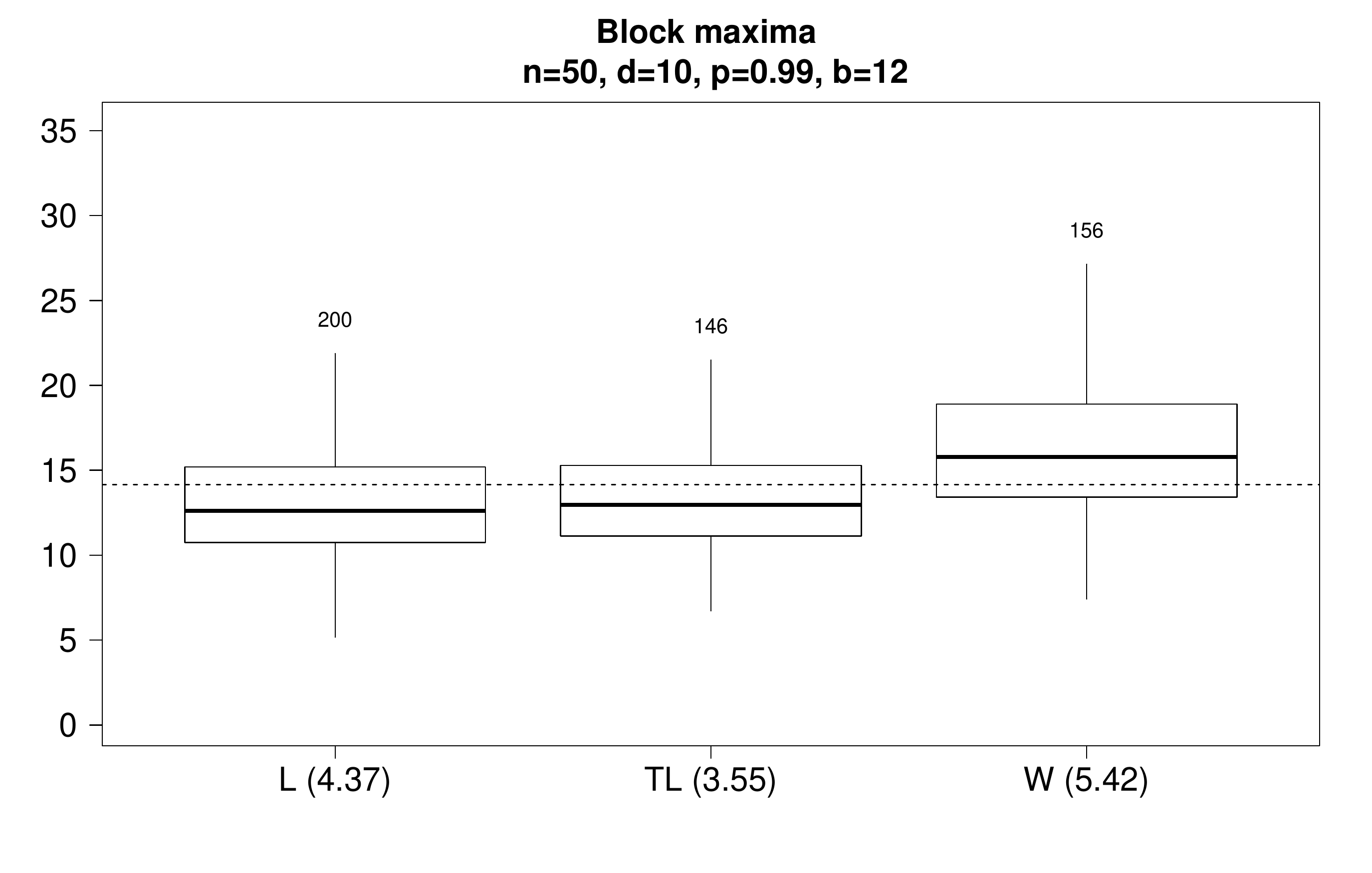}
	\caption{Boxplots of the 99\%-quantile estimations obtained from 5000 replications with block maxima margins $F_j^{(b)}$ from \eqref{eq:block_max_dist}, $b=12$, $n=50$ and $d=10$. The dashed line marks the true quantile of the block maxima distribution $q_{0.99}=14.151$. Outliers are not shown, but the numbers of such values are depicted at the end of the whiskers. The mean squared error divided by $q_{0.99}^2$ for scaling is shown in brackets.}
	\label{fig:blockmax100_5}
\end{figure}

Annual maximal flows are computed by maximization over finite blocks of, say, monthly observations and thus, it is common to assume that annual maxima are exactly GEV distributed. This section analyzes the behavior of the estimators when observations are in fact generated by block maxima distributions. More precisely, let $T_\nu$ denote the $t$-distribution function with $\nu$ degrees of freedom. We choose
\begin{align}\label{eq:block_max_dist}
F_j(x)=F_j^{(b)}(x)=\left[2\cdot T_{1/\xi}\left(\left\{1+\xi\frac{x-\mu_j}{\sigma_j}\right\}\cdot T_{1/\xi}^{-1}\left(1-\frac{1}{2b}\right)\right)-1\right]^b,
\end{align}
where in general $b\geq2$. It can be shown that $F_j^{(b)}$ is the distribution function of a block maximum
\begin{align*}
\mu_j+\frac{\sigma_j}{\xi}\left(\frac{\max\{|Z_1|,\ldots,|Z_b|\}}{a_b}-1\right)\quad \text{with } a_b=T_{1/\xi}^{-1}\left(1-\frac{1}{2b}\right),
\end{align*}
over independent and identically $t_{1/\xi}$-distributed random variables $Z_1,\ldots,Z_b$. The standardization is used in order to obtain $\lim_{b\rightarrow\infty}F_j^{(b)}=G_{\mu_j,\sigma_j,\xi}$, which is the GEV distribution function with location $\mu_j$, scale $\sigma_j$ and shape $\xi$. \\
Here, we have a look at annual maximal flows, so we set $b=12$. For the parameters we choose $(\mu_j, \sigma_j, \xi)' = (1.75, 1, 0.3)'$ for all $j=1,\ldots, d.$

\smallskip

For the different scenarios, i.e., dimensions $d$, sample lengths $n$ and probabilities $p$, we came to the same conclusion, so just one particular scenario with $d=10$, $n=50$ and $p=0.99$ is depicted in Figure \ref{fig:blockmax100_5}. The true 99\%-quantile of this scenario is $q_{0.99}=14.151 \text{ m}^3/\text{s}$, which corresponds to the dashed line in the figure. All competitors show a similar bias, caused by finite $b$ for the parametric estimators (model misspecification) and by finite $n$ for the Weissman approach. We complete the analysis of finite blocks $b$ and continue with seasonal observations under the GEV assumption.

\subsection{Seasonal observations}

\begin{figure}[t!]
	\centering
	\includegraphics[width=.85\linewidth]{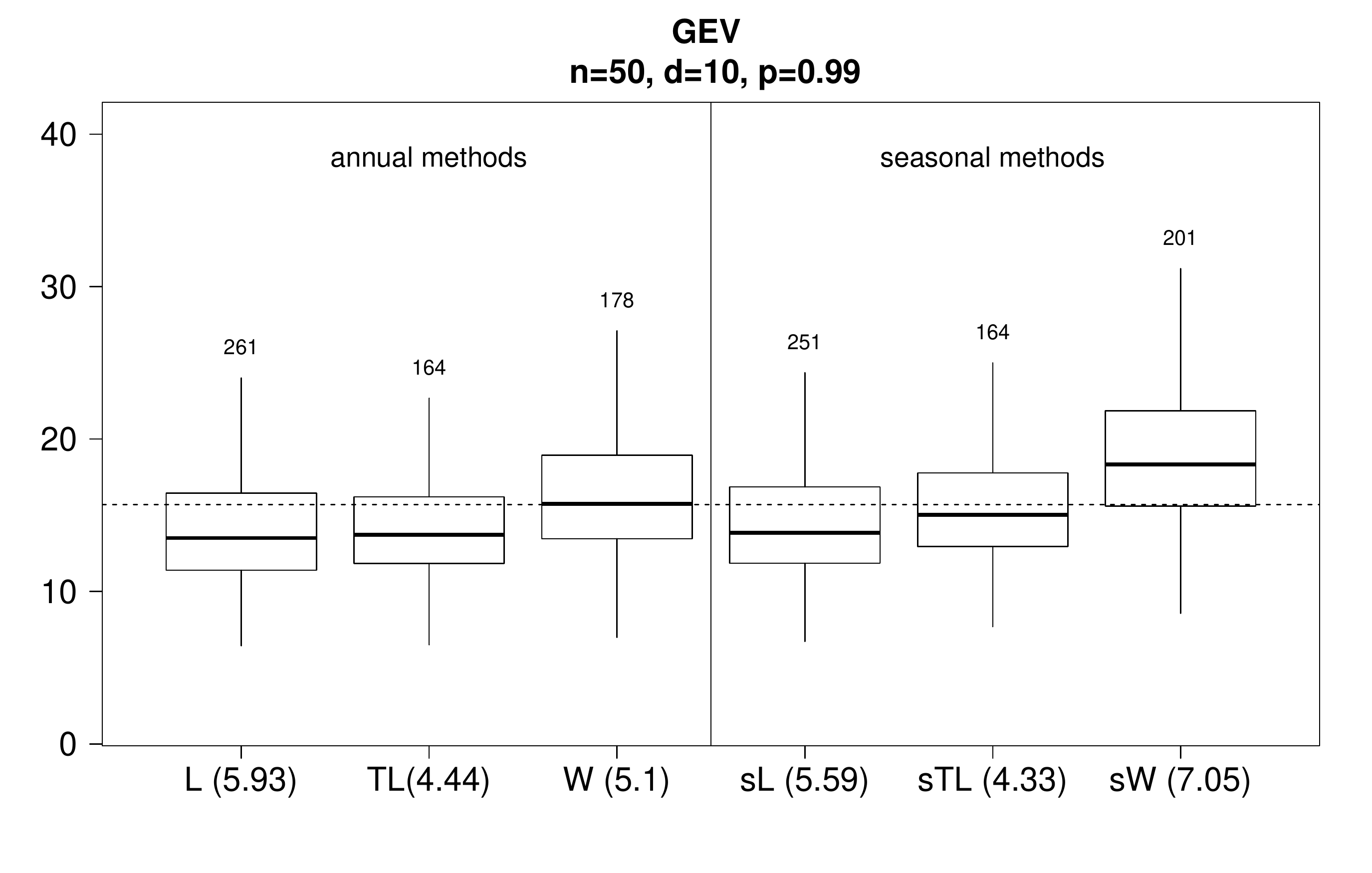}
	\caption{Boxplots of the 99\%-quantile estimations of 5000 replications for a seasonal scenario with $n=50$ and $d=10$. The dashed line marks the true quantile of the product of the two GEVs $q_{0.99}= 15.692$. Outliers are not shown, but the numbers are depicted at the end of the whiskers. The mean squared error divided by $q_{0.99}^2$ for scaling is shown in brackets.}
	\label{fig:gev100_5}
\end{figure}

Now, according to \eqref{eq:productdist} and the related discussion, we set $F_j=F_{w,j}\cdot F_{s,j}$, where each of the seasonal distributions is a GEV with parameters $\mu_{w,j},\sigma_{w,j},\xi_{w,j}$ and $\mu_{s,j},\sigma_{s,j},\xi_{s,j}$. Here, besides the methods (W), (L) and (TL) used before, we also apply seasonal estimators from Section \ref{sec:seasonal_parametric}, namely
\begin{enumerate}
\item[-] seasonal Weissman's estimator, \hfill (sW)
\item[-] seasonal $L$-moments and \hfill (sL)
\item[-] seasonal Trimmed $L$-moments. \hfill (sTL)
\end{enumerate}

Based on experience from real datasets we choose the parameters $(\mu_{w,j},\sigma_{w,j},\xi_{w,j})'=(2,1,0.2)'$ for the winter and $(\mu_{s,j},\sigma_{s,j},\xi_{s,j})'=(1.5, 1, 0.4)'$ for the summer season. The true quantile in this scenario is $q_{0.99}= 15.692 \text{ m}^3/\text{s}$. \\
Figure \ref{fig:gev100_5} shows the boxplots for the regional scenario with $d=10$, $n=50$ and $p=0.99$. We omit the results for other combinations of $(p,n,d)$ because the conclusions drawn are similar. \\
The annual L- and TL-estimators are based on misspecified assumptions, since here we consider observations from two-component GEV's. In our simulation experiments this is reflected by a non-negligible bias. Annual Weissman (W) and seasonal TL-moments (sTL) outperform their competitors, probably because they do not suffer from the previous misspecification issue. However, this is also true for seasonal Weissman (sW), and still its performance suffers from a devastating bias. We are not able to explain this poor behavior and can neither prove the presumed asymptotic normality of estimator sW. \\
Summing up, we do not recommend sW. We neither recommend L and TL if strong seasonal variability is present. Instead, we suggest sTL if one assumes that variability is not an issue within each season. Otherwise, we advise to apply W, provided that the dimension $d$ is not very small. Note that local estimation (not reported here) with the Weissman approach can be not recommended due to the typically dominant bias. 

\subsection*{A comment on alternative scenarios}

We also studied several other scenarios that are not reported here because of qualitatively very similar conclusions. For instance, we studied the performance under a variety of plausible parameter values for $(\mu_{w,j},\sigma_{w,j},\xi_{w,j})'$, $(\mu_{s,j},\sigma_{s,j},\xi_{s,j})'$, $(\mu_j, \sigma_j, \xi)'$ and local sample lengths $n_j$, where plausibility is based on our personal experience from real data applications. We also looked at scenarios where both sources of deviations of $F$ from a GEV model, seasonal variability and block maxima observations for each season, have been used. Summing up, we always drew conclusions very similar to that of the experiments reported here.

\section{Application: The Mulde river basin} \label{sec:application}

\begin{figure}[b!]
	\centering
	\includegraphics[width=.85\linewidth]{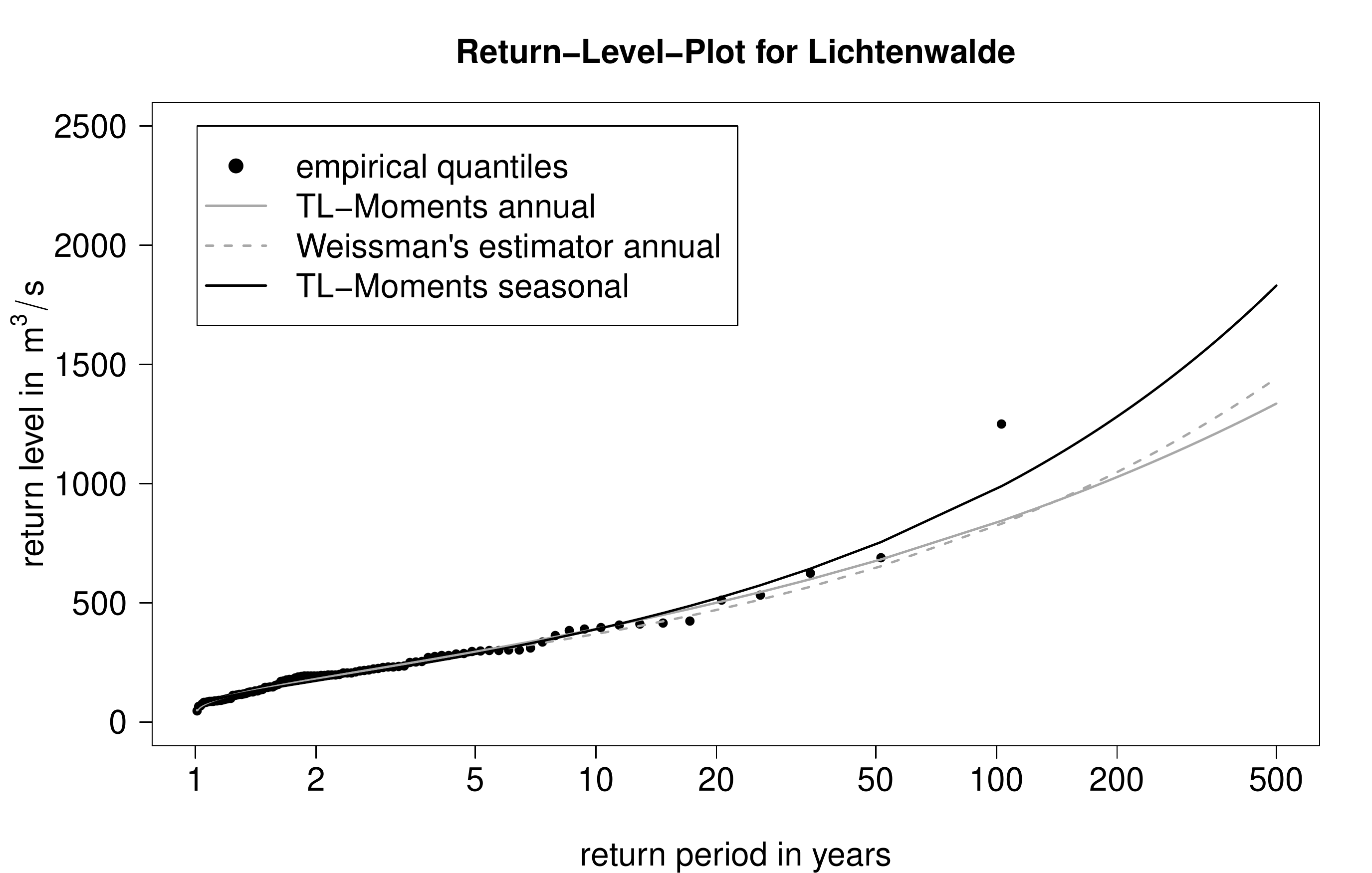}
	\caption{Return level plot for the site Lichtenwalde with sample lengths $n=102$.}
	\label{fig:application}
\end{figure}

The Mulde river basin with a catchment area of about 7400\,km$^2$ is located in Saxony, a federal state in the East of Germany. The basin is named after a main tributary of the Elbe river, the ``Vereinigte Mulde'', with an average flow at the mouth of about 73\,m$^3$/s. From a flood frequency analysts point of view, this region is particularly interesting because there were two flood disasters in the recent past. The first one was in 2002 and the second one in 2013. Both were summer events caused by heavy rainfall during some days. The damage caused amounts to several billion euros. It is thus of particular relevance to account for risk adequately. \\
We study a data set of monthly maximal flows from $d=8$ sites: Lichtenwalde, Golzern, Zickau-Pölbitz, Wechselburg, Nossen, Hopfgarten, Pockau and Borstendorf. Their local sample lengths $n_j, j=1,\ldots,d,$ are varying between 83 and 102 years. A hydrological year runs from November to October and typical seasons in this region are winter from November to April and summer from May to October. We calculate winter, summer and annual maximal flows $W_{i,j}$, $S_{i,j}$ and $X_{i,j}=\max\{W_{i,j},S_{i,j}\}$ for each available combination $(i,j)$ of year and station resulting in schemes as in \eqref{eq:sample}. In order to apply our regional estimators to the annual (resp. seasonal) observations, it is recommended to check first assumption $\mathcal{H}_{0,\text{tail}}$ from \eqref{eq:tailhomogeneity} for annual (resp. seasonal) data. The application of homogeneity test based on the main result of \citet{LilKinFri16} considering the hypothesis of identical shape parameters resulted in $p$-values of 0.41, 0.63 and 0.54 for annual, winter and summer schemes, respectively. This indicates that there are no serious sources of tail heterogeneity. In what follows, since regional estimation works comparably well also under moderate heterogeneity \citep{LetWalWoo87}, we consider the sites as tail homogeneous. \\
Figure \ref{fig:application} depicts a return level plot for station Lichtenwalde, which is one of the eight sites of the group. In general, a point in the plot has coordinates $\big(T,F^{-1}(1-1/T)\big)$, with $T>1$ called return period and $F^{-1}(1-1/T)$ called $T$-year return level. The three lines in Figure \ref{fig:application} correspond to three regional estimators of $F$: The Weissman estimator, the TL-moment estimator under the usual GEV assumption and the sTL-moment estimator under the two-component GEV assumption. The first two take only annual flows $X_{i,j}$ into account, while the third one is based on seasonal observations $W_{i,j}$ and $S_{i,j}$. These three procedures turn out to be superior to other considered competitors (see Section \ref{sec:sim}). The dots $\big(T_{i:n},X_{i:n}\big)$, $i=1,\ldots,n$, in the figure represent the empirical counterpart: The ordered observations $X_{1:n}\leq\ldots\leq X_{n:n}$ are called empirical return levels with corresponding empirical return periods $T_{i:n}=1/(1-p_{i:n})$ and plotting positions $p_{i:n}=i/(n+1)$. \\
In Figure \ref{fig:application} we observe that estimated $T$-year return levels of the three competing estimators are pretty much the same up to $T\approx 40$ years. This corresponds to almost the same quantile curves $\hat{F}^{-1}(p)$ for $p\leq 1-1/40=0.975$. Note that the largest observation stands out from the bulk of the data. The corresponding empirical return level of around 100 years is half of the sTL and quarter of both annual estimated return levels. The fact that the seasonal approach is closer to the sample is also the case in many other of our data applications (not reported here). This, indeed, should not be always interpreted as an advantage, because of possible data-overfitting. For instance, the empirical version $F_n$ perfectly fits the data even though this estimator is not recommended due to its horrible variability in the tails. However, note also that seasonal estimation is based on extended information: We do not simply take annual $X_{i,j}$ but use variables $W_{i,j}$, $S_{i,j}$ instead and account for possible seasonal variability. In this particular application the variability in the tails is reflected by very different estimates of the shapes, $\hat{\xi}_w=0.29$ for winter and $\hat{\xi}_s=0.41$ for the summer season. Without accounting for the variability we would estimate the shape as $\hat{\xi}_{\text{annual}}=0.25$, which surprisingly is not in-between the first two values. This counter-intuitive result is explained by the fact that mixing two very different distributions can result in misleading overlapping. \\
In the end, practitioners are interested in the estimation of, say, the $p=99\%$-quantile or, equivalently, the $T=100$ year return level of the annual maximal flow distribution $F$ of Lichtenwalde. Here we focus on the regional two-component GEV approach from Appendix \ref{sec:pwm} based on TL-moments. Our theory allows us to calculate a point estimate $\hat{q}_p^{(\text{sTL})}=962.5$ and its corresponding estimated 95\%-confidence interval of $\hat{I}_{0.95}^{(\text{sTL})}=[680.9,~1244.1]$ by plugging in estimates of all unknown quantities in formula \eqref{eq:confinterval_two-comp}. For comparison, the point estimate and 95\%-confidence interval for the local sTL (resp. regional W) procedure is given by $\hat{q}_p^{(\text{loc sTL})}=1089.6$ and $\hat{I}_{0.95}^{(\text{loc sTL})}=[568.0,~1611.2]$ (resp. $\hat{q}_p^{(\text{W})}=823.2$ and $\hat{I}_{0.95}^{(\text{W})}=[521.0,~1125.4]$), where local means that only observations from Lichtenwalde are used for estimation of the two-component GEV model.

%% latex table generated in R 3.3.1 by xtable 1.8-2 package
%% Fri Nov 11 11:15:06 2016
%\begin{table}[ht]
%	\centering
%	\caption{95\%-confidence intervals (CI) for the $d=8$ sites for sTL from \eqref{eq:two_comp_var} and W from \eqref{eq:ci}. The intervals are indicated by $\hat q_{0.99} \pm$ half width of CI.}
%	\begin{tabular}{l||rr|r}
%		\hline
%		site & local sTL & regional sTL & regional W\\
%		\hline \hline
%		 Golzern & 2514.7 $\pm$ 1254.2 & 2383.9 $\pm$ 698.2 & 1891.8 $\pm$ 699.1 \\ 
%		 Zwickau-Pölbitz & 708.5 $\pm$ 441.3 & 577.5 $\pm$ 198.7 & 488.1 $\pm$ 185.5 \\ 
%		 Wechselburg & 1015.9 $\pm$ 490.6 & 979.4 $\pm$ 298.8 & 814.6 $\pm$ 299.1 \\ 
%		 Nossen & 326.5 $\pm$ 183.8 & 296.5 $\pm$ 89.2 & 243.7 $\pm$ 91.6 \\ 
%		 Hopfgarten & 247.2 $\pm$ 89.5 & 369.7 $\pm$ 105.1 & 284.5 $\pm$ 105.1 \\ 
%		 \textbf{Lichtenwalde} & 1089.6 $\pm$ 521.6 & 962.5 $\pm$ 281.6 & 823.2 $\pm$ 302.2 \\ 
%		 Pockau & 362.0 $\pm$ 187.7 & 297.5 $\pm$ 91.6 & 283.2 $\pm$ 106.2 \\ 
%		 Borstendorf & 449.1 $\pm$ 230.3 & 429.0 $\pm$ 136.6 & 336.3 $\pm$ 127.8 \\ 
%		\hline
%	\end{tabular}
%	\label{tab:ci}
%\end{table} 

\section{Conclusion}

We consider asymptotic methods for regional frequency analysis that are able to deal with seasonal variability and inter-site dependence. Our framework is flexible in the sense that we do allow for very different local sample lengths and we do not restrict ourselves to parametric dependence models. \\
The methods are investigated via simulations and illustrated by a real data application. Taking into account possible seasonal variability can have an enormous impact on the results. For instance, the return period of the severe flood in 2002 estimated by the two-component approach was only half of that by the classical one-component GEV approach. We recommend seasonal models whenever the corresponding observations are available in the given data set. \\
It even might be useful to consider more than $K=2$ seasons. A corresponding theoretical result, the asymptotic normality for $K$-component GEV quantiles, can be easily deduced along the same lines as the proof of case $K=2$.

\section*{Acknowledgements}

We are grateful to Professor Andreas Schumann from the Department of Civil Engineering, Ruhr-University Bochum, Germany, for providing us hydrological data and for helpful discussions. The financial support of the Deutsche Forschungsgemeinschaft (SFB 823, ``Statistical modelling of nonlinear dynamic processes'') is gratefully acknowledged.

\appendix

\section{GEV parameterizations by probability weighted moments}\label{sec:PWM}

Let $G_\theta$  with parameters $\theta=(\mu,\sigma,\xi)'$ denote a GEV distribution function defined in \eqref{eq:GEV}. Throughout this section we will assume that $\xi<1$. The probability weighted moment (PWM) of $G_\theta$ of order $k\in\mathbb{N}_0$ is defined by
\begin{align}\label{eq:pwm}
\beta_k=\beta_k(\theta)=\int_\mathbb{R} x\cdot G_\theta^k(x)~dG_\theta(x)=\int_{(0,1)} G_\theta^{-1}(u)\cdot u^k~du.
\end{align}
It is a well-known fact that every distribution with finite expectation is uniquely determined by its sequence of PWM's \citep{Hos07}. It is thus little surprising that the parameter $\theta$ already is determined by a finite number of PWM's. Two such re-parameterizations of the GEV family are recapped below.

\smallskip

\textbf{1) Method of L-moments}

\smallskip

The most popular re-parameterization of the GEV based on the first three PWM's $\bm{\beta}=(\beta_0,\beta_1,\beta_2)'$ is implicitly given by the equation system
\begin{align}\label{eq:Lmom_equations}
\left\{
\begin{array}{cll}
\frac{3^\xi-1}{2^\xi-1}&=\frac{3\beta_2-\beta_0}{2\beta_1-\beta_0}&=\frac{\lambda_3}{2\lambda_2}+\frac{3}{2} \\
\sigma&=\frac{(2\beta_1-\beta_0)\xi}{\Gamma(1-\xi)(2^\xi-1)}&=\frac{\lambda_2\xi}{\Gamma(1-\xi)(2^\xi-1)} \\
\mu&=\beta_0+\frac{\sigma}{\xi}\left[1-\Gamma(1-\xi)\right]&=\lambda_1+\frac{\sigma}{\xi}\left[1-\Gamma(1-\xi)\right]
\end{array}
\right. ,
\end{align}
where $\Gamma$ denotes the gamma function and with $\lambda_1=\beta_0$, $\lambda_2=2\beta_1-\beta_0$, $\lambda_3=6\beta_2-6\beta_1+\beta_0$ denoting the first three so-called L-moments of $G_\theta$ \citep{Hos90}. For some reasons to be found in the previous reference, practitioners usually prefer to work with L-moments instead of PWM's. \\
However, an explicit solution $\phi$ with $\theta=\phi(\bm{\beta})$ of the previous equation system does not exist. Instead of relying on a numerical solver, it is convenient to replace the first equation in \eqref{eq:Lmom_equations} by the approximation
\begin{align}
\xi=-7.859\cdot h(\bm{\beta})-2.9554\cdot h(\bm{\beta})^2~\text{ with }~h(\bm{\beta})=\frac{2\beta_1-\beta_0}{3\beta_2-\beta_0}-\frac{\log 2}{\log 3}=\frac{2}{3+\lambda_3/\lambda_2}-\frac{\log 2}{\log 3}
\end{align}
in order to obtain an explicit solution. The error in $\xi$ caused by the approximation is known to be smaller than 0.0009 for $-1/2<\xi</1/2$ \citep{HosWalWoo85} and is therefore negligible in practice. \\

\medskip

\textbf{2) Method of TL-moments}

\smallskip

\citet{ElaSeh03} proposed a generalization of L-moments denoted by $\lambda_k^{(r,s)}$, where $(r,s)\in\mathbb{N}_0^2$ is a trimming parameter: Larger values for $r$ (resp. $s$) decrease the influence of the smallest (resp. largest) observations, with $\lambda_k^{(0,0)}=\lambda_k$ being the usual (untrimmed) L-moments. For further insight we refer to \citet{Hos07}. \\
For the reminder we focus on the trimming $(r,s)=(0,1)$, which means that only the influence of the largest observations is slightly reduced and which has proven to be a good choice in our simulation experiments. Again, the procedure for the derivation of GEV parameters will be based only on the first three trimmed L-moments and, thanks to \citet[Sec. 2.4]{Hos07}, we have that $\lambda_1^{(0,1)}=2\beta_0-2\beta_1$, $\lambda_2^{(0,1)}=3/2(4\beta_1-\beta_0-3\beta_2)$ and $\lambda_3^{(0,1)}=2/3(36\beta_2-18\beta_1+2\beta_0-20\beta_3)$. It can be verified that the equation system
\begin{align}\label{eq:TLmom_equations}
\left\{
\begin{array}{cll}
\frac{5\cdot 4^\xi-12\cdot 3^\xi+9\cdot 2^\xi-2}{3^\xi-2^{\xi+1}+1}&=\frac{2\cdot(18\beta_2-9\beta_1+\beta_0-10\beta_3)}{4\beta_1-\beta_0-3\beta_2}&=\frac{9\lambda_3^{(0,1)}}{4\lambda_2^{(0,1)}} \\
\sigma&=\frac{4\beta_1-\beta_0-3\beta_2}{\Gamma(-\xi)\cdot(3^\xi-2^{\xi+1}+1)}&=\frac{2/3\cdot \lambda_2^{(0,1)}}{\Gamma(-\xi)\cdot(3^\xi-2^{\xi+1}+1)} \\
\mu&=2(\beta_0-\beta_1)+\frac{\sigma}{\xi}-\frac{\sigma\cdot\Gamma(-\xi)}{(2^\xi-2)^{-1}}&=\lambda_1^{(0,1)}+\frac{\sigma}{\xi}-\frac{\sigma\cdot\Gamma(-\xi)}{(2^\xi-2)^{-1}} \\
\end{array}
\right. 
\end{align}
determines the GEV parameters. Similar to the first procedure, we replace the first equation in \eqref{eq:TLmom_equations} by the approximation
\begin{align*}
\xi=-8.567394\cdot h(\bm{\beta})+0.675969\cdot h(\bm{\beta})^2~\text{ with }~h(\bm{\beta})=\frac{6\beta_1-3/2\beta_0-9/2\beta_2}{15\beta_2-5/3\beta_0-40/3\beta_3}-\frac{2\log 2-\log 3}{3\log 3-2\log 4}
\end{align*}
in order to obtain an explicit solution, where now $\bm{\beta}=(\beta_0,\beta_1,\beta_2,\beta_3)'$ denotes the vector of the first four PWM's of $G_\theta$.

\section{Asymptotic statistics}\label{sec:theory}

\subsection{A CLT for PWMs with applications to GEV parameter estimation}\label{sec:pwm}

We recap the main result from \citet{LilKinFri16}. These authors verify the joint asymptotic normality of sample PWMs
\begin{align*}
\hat{\beta}_{k,j,r_j,n}=\frac{1}{n_j}\sum_{i=1}^{n_j}X_{a_j+i}\cdot F^k_{j,a_j+1:n}(X_{a_j+i}),~k=0,\ldots,K,~j=1,\ldots,d
\end{align*}
computed from scheme \eqref{eq:sample}.

\begin{theorem}\label{theo:pwm}
Suppose that $\mathbf{X}_i$, $i\geq1$, is a sequence of independent copies of $\mathbf{X}=(X_1,\ldots,X_d)'$ with marginal distribution functions $F_j(x)=\mathbb{P}(X_j\leq x)$ and with marginal PWMs
\begin{align*}
\bm{\beta}=\left(\beta_0(F_1),\ldots,\beta_K(F_1),\beta_0(F_2),\ldots,\beta_K(F_2),\ldots\ldots,\beta_0(F_d),\ldots,\beta_K(F_d)\right)'\in\mathbb{R}^{d\cdot K}.
\end{align*}
We further assume that
\begin{align*}
\mathbb{E}\left[X_{j}F_j^k(X_{j})X_{\ell}F_\ell^m(X_{\ell})\right]<\infty~\text{ for all }~1\leq j,\ell\leq d~\text{ and }~0\leq k,m< K
\end{align*}
and that $\sup_{x\in\mathbb{R}}\left|x\{F_j(x)(1-F_j(x))\}^w\right|<\infty$ for all $j=1,\ldots,d$ and some $w\in[0,1/2)$. Let $\hat{\bm{\beta}}_{\bm{r},n}$ denote the sample counterpart of $\bm{\beta}$ computed from scheme \eqref{eq:sample} with $n_j/n\rightarrow r_j\in\mathbb{R}_+$ for $n\rightarrow\infty$. Then, for $n\rightarrow\infty$, we have that
\begin{align}
\sqrt{n}\left(\hat{\bm{\beta}}_{\bm{r},n}-\bm{\beta}\right)\stackrel{D}{\longrightarrow}\mathcal{N}\left(0,~\Sigma_{\bm{r}}\right),
\end{align}
where the limiting variance matrix $\Sigma_{\bm{r}}\in\mathbb{R}^{dK\times dK}$ is defined below.
\end{theorem}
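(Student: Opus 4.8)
The plan is to regard each sample PWM as a plug-in functional of the local empirical distribution function, to linearise it through its influence function, and then to read off the joint limit law from a multivariate central limit theorem for a triangular array of summands that are independent along the common time axis of scheme \eqref{eq:sample}. Concretely I would set $T_k(G)=\int x\,G^k(x)\,dG(x)$, so that $\beta_k(F_j)=T_k(F_j)$ by \eqref{eq:pwm} and $\hat\beta_{k,j,r_j,n}=T_k(\hat F_j)$, where $\hat F_j$ is the empirical distribution function built from the $n_j$ observations of site $j$. A contamination (G\^ateaux) expansion of $T_k$ at $F_j$ produces the centred influence function
\begin{align*}
\psi_{k,j}(y)=y\,F_j^k(y)-(k+1)\,\beta_k(F_j)+k\int_y^\infty x\,F_j^{k-1}(x)\,dF_j(x),
\end{align*}
and the goal of this first step is the stochastic expansion
\begin{align*}
\sqrt{n_j}\bigl(\hat\beta_{k,j,r_j,n}-\beta_k(F_j)\bigr)=\frac{1}{\sqrt{n_j}}\sum_{i=a_j+1}^{n}\psi_{k,j}(X_{i,j})+o_P(1).
\end{align*}

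To pass to the joint statement I would renormalise by the common factor $\sqrt n$. Using $n_j/n\to r_j$, the stacked vector $\sqrt n(\hat{\bm\beta}_{\bm r,n}-\bm\beta)$ equals, up to $o_P(1)$, a sum $n^{-1/2}\sum_{i=1}^n\mathbf{Y}_i$ whose $(k,j)$-coordinate is $(n/n_j)\,\mathds{1}\{i>a_j\}\,\psi_{k,j}(X_{i,j})$. The vectors $\mathbf{Y}_1,\ldots,\mathbf{Y}_n$ are independent, being functions of the independent copies $\mathbf{X}_i$; within a single $\mathbf{Y}_i$ the coordinates are dependent through the inter-site dependence of $\mathbf{X}$, and the array fails to be identically distributed only through the deterministic indicators $\mathds{1}\{i>a_j\}$. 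By the Cram\'er--Wold device it then suffices to prove a one-dimensional Lindeberg--Feller central limit theorem for each linear combination $\bm\lambda'\mathbf{Y}_i$.

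The covariance computation is where the staircase structure enters. Since $\mathds{1}\{i>a_j\}\,\mathds{1}\{i>a_\ell\}=\mathds{1}\{i>\max(a_j,a_\ell)\}$, only the $n-\max(a_j,a_\ell)=\min(n_j,n_\ell)$ jointly recorded years contribute to a cross term, and a short computation yields
\begin{align*}
(\Sigma_{\bm r})_{(k,j),(m,\ell)}=\frac{\min(r_j,r_\ell)}{r_j\,r_\ell}\,{\rm Cov}\bigl(\psi_{k,j}(X_j),\psi_{m,\ell}(X_\ell)\bigr),
\end{align*}
the covariances being finite by the cross-moment hypothesis $\mathbb{E}[X_jF_j^k(X_j)X_\ell F_\ell^m(X_\ell)]<\infty$; in particular the within-site variance carries the expected factor $1/r_j$. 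The Lindeberg condition then reduces to uniform integrability of the squared linear combinations, which follows from the finiteness of the second moments of the $\psi_{k,j}(X_j)$ together with the boundedness of the weights $n/n_j\to1/r_j$.

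The delicate point is the remainder control in the first step. Because the integrand carries the unbounded factor $x$, the functional $T_k$ is not continuous in the supremum norm, so a naive functional delta method is unavailable and the quadratic and higher-order terms in the expansion of $T_k(\hat F_j)-T_k(F_j)$---all of the schematic form $\int x\,(\hat F_j-F_j)^2\cdots$---must be bounded directly. I would do so through a weighted approximation of the empirical process, exploiting $|\hat F_j(x)-F_j(x)|=O_P\bigl(n_j^{-1/2}\{F_j(x)(1-F_j(x))\}^{1/2-\delta}\bigr)$ away from the extreme order statistics. The hypothesis $\sup_x|x\,\{F_j(x)(1-F_j(x))\}^w|<\infty$ for some $w\in[0,1/2)$ is exactly what makes $|x|$ integrable against such weighted increments, forcing the remainder to be $o_P(n^{-1/2})$; controlling this weight uniformly across the differing sample sizes $n_j$ of the staircase scheme is the crux of the argument.
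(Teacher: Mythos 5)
A preliminary remark: this paper contains no proof of Theorem~\ref{theo:pwm}. The theorem is explicitly a recap of the main result of \citet{LilKinFri16}, and Appendix~\ref{sec:pwm} only restates the block structure of $\Sigma_{\bm{r}}$ and explains how to estimate it; so your attempt can be compared with what the paper asserts, not with an argument the paper gives. Judged that way, your route is the standard influence-function/H\'ajek-projection one and it lands exactly on the asserted limit. Your $\psi_{k,j}$ is precisely the centered version of the paper's $Z_{k,j}$ in \eqref{eq:ztrue}: since $\mathbb{E}[Z_{k,j}]=(k+1)\beta_k(F_j)$, one has $\psi_{k,j}(X_j)=Z_{k,j}-\mathbb{E}[Z_{k,j}]$, hence ${\rm Cov}\big(\psi_{k,j}(X_j),\psi_{m,\ell}(X_\ell)\big)={\rm Cov}(Z_{k,j},Z_{m,\ell})$, and your overlap count $n\min(n_j,n_\ell)/(n_j n_\ell)\rightarrow\min(r_j,r_\ell)/(r_j r_\ell)$ reproduces the block factor stated below the theorem. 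The Cram\'er--Wold/Lindeberg--Feller step is unproblematic: for fixed $d$ the summands take only finitely many distributional forms (the indicator patterns, at most $d+1$ of them since $a_1\leq\ldots\leq a_d$), the weights $n/n_j\rightarrow 1/r_j$ are bounded, and the second moments are finite because the growth condition with $w<1/2$ already yields $\mathbb{E}[X_j^2]=\int_0^1 [F_j^{-1}(u)]^2\,du\leq C^2\int_0^1\{u(1-u)\}^{-2w}\,du<\infty$.

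Two caveats concern the only schematic step, the $o_P(n^{-1/2})$ remainder in the linearisation. First, your emphasis there is slightly misplaced: uniformity ``across the differing sample sizes $n_j$'' is not the crux, because $d$ is fixed and every $n_j\asymp n$ with $r_j>0$, so the staircase scheme adds nothing beyond a maximum over finitely many sites. The genuine difficulty sits entirely within a single site: the weighted bound $|\hat F_j(x)-F_j(x)|=O_P\big(n_j^{-1/2}\{F_j(x)(1-F_j(x))\}^{1/2-\delta}\big)$ (a Chibisov--O'Reilly/Shorack--Wellner type statement) is valid only on the range where $F_j(1-F_j)$ is bounded away from the endpoints at rate roughly $1/n_j$, and outside that range the quadratic terms of the form $\int x\,(\hat F_j-F_j)^2\cdots$ must be shown negligible by a separate truncation argument for the top and bottom order statistics; this interplay between the unboundedness of $x$, the exponent $w<1/2-\delta$, and the endpoint behaviour is where the actual work lies. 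Second, as written your text names the right tool and the right hypothesis but does not carry out that computation, so it is a correct and well-aimed plan rather than a complete proof; its conclusion, however, agrees in every detail with the covariance matrix the paper quotes from \citet{LilKinFri16}.
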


\citet{LilKinFri16} proved that the matrix $\Sigma_{\bm{r}}=\lim_{n\rightarrow\infty}{\rm Var}\left(\sqrt{n}\left(\hat{\bm{\beta}}_{\bm{r},n}-\bm{\beta}\right)\right)$ is defined block-wise by
\begin{align*}
%\lim_{n\rightarrow\infty}{\rm Var}\left(\sqrt{n}\left(\hat{\bm{\beta}}_{j,r_j,n}-\bm{\beta}_j\right)\right)&=\frac{1}{r_j}\cdot{\rm Var}(\mathbf{Z}_j)\in\mathbb{R}^{K\times K} \\
\lim_{n\rightarrow\infty}{\rm Cov}\left(\sqrt{n}\left(\hat{\bm{\beta}}_{j,r_j,n}-\bm{\beta}_j\right),~\sqrt{n}\left(\hat{\bm{\beta}}_{\ell,r_\ell,n}-\bm{\beta}_\ell\right)\right)=\frac{\min(r_j,r_\ell)}{r_j\cdot r_\ell}\cdot{\rm Cov}(\mathbf{Z}_j,~\mathbf{Z}_\ell)\in\mathbb{R}^{K\times K},
\end{align*}
where $\bm{\beta}_j=\left(\beta_0(F_j),\ldots,\beta_K(F_j)\right)'$ with sample counterpart $\hat{\bm{\beta}}_{j,r_j,n}$ computed from the $j$-th row of scheme \ref{eq:sample}. The random vectors $\mathbf{Z}_j=(Z_{0,j},Z_{1,j},\ldots,Z_{K-1},j)'$, $j=1,\ldots,d$, are defined componentwise by \begin{align}\label{eq:ztrue}
Z_{k,j}=X_j\cdot F_j^k(X_j)+\int_{\mathbb{R}}x\cdot k\cdot F_j^{k-1}(x)\cdot \mathbbm{1}(X_j\leq x)~dF_j(x).
\end{align}
The matrices ${\rm Cov}(\mathbf{Z}_j,\mathbf{Z}_\ell)$ are consistently estimated by their sample analogues: Let
\begin{align}\label{eq:zhat}
\hat{Z}_{i,k,j}=X_{i,j}\cdot F_{j,a_j+1:n}^k(X_{i,j})+\frac{1}{n_j}\sum_{\ell=1}^{n_j}X_{\ell,j}\cdot k\cdot F_{j,a_j+1:n}^{k-1}(X_{\ell,j})\cdot \mathbbm{1}(X_{i,j}\leq X_{\ell,j})
\end{align}
and $\hat{\mathbf{Z}}_{i,j}=(Z_{i,0,j},Z_{i,1,j},\ldots,Z_{i,K-1,j})'$, $i=a_j+1,\ldots,n$. For $1\leq j,\ell,\leq d$, the covariance matrix ${\rm Cov}(\mathbf{Z}_j,\mathbf{Z}_\ell)$ is estimated by the empirical covariance matrix of the sample
\begin{align*}
\left\{\left(\hat{\mathbf{Z}}_{\max(a_j,a_\ell)+1,j},\hat{\mathbf{Z}}_{\max(a_j,a_\ell)+1,\ell}\right),\ldots,\left(\hat{\mathbf{Z}}_{n,j},\hat{\mathbf{Z}}_{n,\ell}\right)\right\}.
\end{align*}
This empirical estimator of $\Sigma_{\bm{r}}$ is denoted by $\hat{\Sigma}_{\bm{r},n}$.

\smallskip

Note that the preceding theorem is purely non-parametric, i.e., except for continuity and the existence of some moments no further assumptions on the distribution are made. For the reminder of this section we consider a parametric problem, the estimation of GEV parameters under a regional homogeneity assumption: Suppose now that $F_j=G_{\mu_j,\sigma_j,\xi_j}$ is a GEV distribution function for all $j=1,\ldots,d$ and that the tail homogeneity assumption
\begin{align}\label{eq:H0_tail}
\mathcal{H}_{0,\text{tail}}:~\xi_1=\ldots=\xi_d=\xi~\text{ for some }~\xi<1/2
\end{align}
holds. Let $\hat{\theta}_{j,r_j,n}=(\hat{\mu}_j,\hat{\sigma}_j,\hat{\xi}_j)'$ denote either the L- or TL-moment estimator of $\theta_j=(\mu_j,\sigma_j,\xi_j)'$ computed from the $j$-th row of scheme \eqref{eq:sample}, with computation based on sample PWMs and the corresponding representation from Appendix \ref{sec:PWM}. Let $\bm{w}\in\mathbb{R}^d$ denote any vector of weights summing up to 1. Thanks to assumption \eqref{eq:H0_tail}, we can estimate $\theta_j$ consistently by the regional version $\big(\hat{\mu}_j,\hat{\sigma}_j,\bm{w}'\hat{\bm{\xi}}\big)$. The asymptotic normality and the corresponding limiting variance can be easily derived from the previous theorem and by the delta method. \\
Just like for the regional Weissman estimator it is meaningful to set
\begin{align*}
\bm{w}_{\text{tail}}=\underset{\bm{w}\in\mathbf{W}}{\operatorname{arg}\operatorname{min}}\bm{w}'\Sigma_{\text{tail}}\bm{w}=\left(\mathbf{1}'\Sigma_{\text{tail}}^{-1}\mathbf{1}\right)^{-1}\cdot\Sigma_{\text{tail}}^{-1}\mathbf{1},
\end{align*}
where $\Sigma_{\text{tail}}=\lim_{n\rightarrow\infty}{\rm Var}\left[\sqrt{n}(\hat{\bm{\xi}}-\bm{\xi})\right]$ with $\hat{\bm{\xi}}=(\hat{\xi}_1,\ldots,\hat{\xi}_d)'$. Again, $\Sigma_{\text{tail}}$ is obtained by Theorem \ref{theo:pwm} and the delta method. Finally, we set $\hat{\xi}=\hat{\bm{w}}_{\text{tail}}'\hat{\bm{\xi}}$ as our regional (T)L-moment estimator of the regional shape $\xi$, where $\Sigma_{\text{tail}}$ involved in $\bm{w}_{\text{tail}}$ is replaced by a consistent estimator $\hat{\Sigma}_{\text{tail}}$.  \\
\citet{LilKinFri16} introduced also a check version $\check{\Sigma}_{\bm{r},n}$ for the estimation of $\Sigma_{\bm{r}}$ working under the GEV assumption. $\check{\Sigma}_{\bm{r},n}$ is a hybrid version, consisting of parametric $3\times 3$-blocks on the diagonal and non-parametric ones outside the diagonal. Throughout the simulation section we applied the check-version because of its superior efficiency. However, in finite samples this approach sometimes leads to invalid covariance matrices reflected by negative eigenvalues. In such cases we simply replace $\bm{w}_{\text{tail}}$ by weights $\bm{w}_{\text{ind}}=(n_1,n_2,\ldots,n_d)'/\sum_{j=1}^d n_j$ proportional to local sample lengths $n_j$, which is the optimal choice under the assumption of spatial independence.

\subsection{Quantiles of two-component GEV distributions}\label{sec:seasonal_parametric}

Let $F(x)=G_{\theta_w}(x)\cdot G_{\theta_s}(x)$ be a two-component GEV distribution, where $\theta_w,\theta_s$ are two unknown parameter vectors and where $G_\theta$ is defined in \eqref{eq:GEV}. Let $p\in(0,1)$ be fixed. Suppose that our interest is in the estimation of
\begin{align}
q_p=F^{-1}(p)~\text{ by the estimator }~\hat{q}_p=\hat{q}_{p,n}=\left(G_{\hat{\theta}_w}\cdot G_{\hat{\theta}_s}\right)^{-1}(p)
\end{align}
and in assessing the estimation uncertainty of $\hat{q}_p$. We will further assume that $\hat{\theta}_w=\hat{\theta}_{w,n}$ and $\hat{\theta}_s=\hat{\theta}_{s,n}$ are independent satisfying
\begin{align*}
\sqrt{n}\left(\hat{\theta}_w-\theta_w\right)\stackrel{D}{\longrightarrow}\mathcal{N}\left(0,~\Sigma_w\right)~\text{ and }~\sqrt{n}\left(\hat{\theta}_s-\theta_s\right)\stackrel{D}{\longrightarrow}\mathcal{N}\left(0,~\Sigma_s\right)
\end{align*}
for $n\rightarrow\infty$.

\begin{theorem}
Let $g_\theta$ denote the density of $G_\theta$ and $J_\theta(x)=\frac{\partial}{\partial\theta}G_\theta(x)\in\mathbb{R}^{1\times 3}$ the Jacobi matrix of $G_\theta(x)$ in $\theta$. Then, for $n\rightarrow\infty$, we have that
\begin{align}
\sqrt{n}\left(\hat{q}_p-q_p\right)\stackrel{D}{\longrightarrow}\mathcal{N}\left(0,~\sigma^2_{p;\theta_w,\theta_s}\right),
\end{align}
where the limiting variance is given by
\begin{align}
\sigma^2_{p;\theta_w,\theta_s}=\frac{G^2_{\theta_s}(q_p)\cdot J_{\theta_w}(q_p)\Sigma_w J_{\theta_w}(q_p)'+G^2_{\theta_w}(q_p)\cdot J_{\theta_s}(q_p)\Sigma_s J_{\theta_s}(q_p)'}{\left[g_{\theta_w}(q_p)\cdot G_{\theta_s}(q_p)+G_{\theta_w}(q_p)\cdot g_{\theta_s}(q_p)\right]^2}.
\end{align}
\end{theorem}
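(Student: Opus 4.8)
The plan is to view $\hat{q}_p$ as a smooth function of the pair $(\hat\theta_w,\hat\theta_s)$ and apply the delta method. Define the map $\varphi(\theta_w,\theta_s)=\left(G_{\theta_w}\cdot G_{\theta_s}\right)^{-1}(p)$, so that $q_p=\varphi(\theta_w,\theta_s)$ and $\hat{q}_p=\varphi(\hat\theta_w,\hat\theta_s)$. Since $\hat\theta_w$ and $\hat\theta_s$ are independent, jointly asymptotically normal and consistent, the stacked estimator $(\hat\theta_w',\hat\theta_s')'$ satisfies $\sqrt{n}\left((\hat\theta_w',\hat\theta_s')'-(\theta_w',\theta_s')'\right)\stackrel{D}{\longrightarrow}\mathcal{N}(0,\Sigma)$ with the block-diagonal matrix $\Sigma=\mathrm{diag}(\Sigma_w,\Sigma_s)$, the off-diagonal blocks vanishing precisely because of independence. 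Once differentiability of $\varphi$ at $(\theta_w,\theta_s)$ is established, the delta method yields $\sqrt{n}(\hat{q}_p-q_p)\stackrel{D}{\longrightarrow}\mathcal{N}(0,\nabla\varphi\,\Sigma\,\nabla\varphi')$, and the claimed variance is just this quadratic form written out in block form.

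The technical heart is computing $\nabla\varphi$, which requires implicit differentiation because $\varphi$ has no closed form (as noted in the main text, no explicit solution exists except when $\theta_w=\theta_s$). First I would differentiate the defining identity
\begin{align*}
G_{\theta_w}\big(\varphi(\theta_w,\theta_s)\big)\cdot G_{\theta_s}\big(\varphi(\theta_w,\theta_s)\big)=p
\end{align*}
with respect to the components of $\theta_w$ and $\theta_s$. Writing $q=q_p$ and differentiating with respect to $\theta_w$, the product rule gives
\begin{align*}
J_{\theta_w}(q)\cdot G_{\theta_s}(q)+g_{\theta_w}(q)\,\frac{\partial q}{\partial\theta_w}\cdot G_{\theta_s}(q)+G_{\theta_w}(q)\cdot g_{\theta_s}(q)\,\frac{\partial q}{\partial\theta_w}=0,
\end{align*}
where the first term is the partial of $G_{\theta_w}$ through its explicit $\theta_w$-dependence and the remaining terms come from the chain rule through $q$. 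Solving for $\partial q/\partial\theta_w$ and proceeding symmetrically for $\theta_s$ produces
\begin{align*}
\frac{\partial q}{\partial\theta_w}=-\frac{G_{\theta_s}(q)\,J_{\theta_w}(q)}{D},\qquad \frac{\partial q}{\partial\theta_s}=-\frac{G_{\theta_w}(q)\,J_{\theta_s}(q)}{D},
\end{align*}
with the common denominator $D=g_{\theta_w}(q)\,G_{\theta_s}(q)+G_{\theta_w}(q)\,g_{\theta_s}(q)$, which is the derivative of $F=G_{\theta_w}G_{\theta_s}$ evaluated at $q$, and is strictly positive since $F$ is strictly increasing near the quantile $q$.

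Assembling the quadratic form then finishes the argument: because $\Sigma$ is block diagonal, $\nabla\varphi\,\Sigma\,\nabla\varphi'$ splits into a sum of two terms, one per season, namely $\frac{\partial q}{\partial\theta_w}\Sigma_w\left(\frac{\partial q}{\partial\theta_w}\right)'+\frac{\partial q}{\partial\theta_s}\Sigma_s\left(\frac{\partial q}{\partial\theta_s}\right)'$, and substituting the gradients above reproduces exactly the stated $\sigma^2_{p;\theta_w,\theta_s}$ with $D^2$ in the denominator. The main obstacle I anticipate is not the algebra but justifying the smoothness and invertibility needed for the delta method and implicit function theorem, that is, verifying that $F$ is continuously differentiable with $F'(q)=D>0$ at the quantile level $p\in(0,1)$ so that $\varphi$ is well-defined and $C^1$ in a neighborhood of $(\theta_w,\theta_s)$; for GEV components with the usual support conditions this holds, but it is worth stating the regularity explicitly. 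A minor point to check is that $q_p$ lies strictly inside the support of $F$ so that both densities $g_{\theta_w}(q)$, $g_{\theta_s}(q)$ and the survival factors are well-behaved, guaranteeing $D>0$.
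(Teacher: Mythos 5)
Your proof is correct, but it takes a genuinely different route from the paper. You obtain the gradient of $\varphi(\theta_w,\theta_s)=\left(G_{\theta_w}\cdot G_{\theta_s}\right)^{-1}(p)$ by implicit differentiation of the identity $G_{\theta_w}(q)\,G_{\theta_s}(q)=p$ and then apply the ordinary finite-dimensional delta method, with the implicit function theorem supplying the needed $C^1$ smoothness of $\varphi$ near $(\theta_w,\theta_s)$ because $(\theta_1,\theta_2,x)\mapsto G_{\theta_1}(x)G_{\theta_2}(x)$ is smooth near $(\theta_w,\theta_s,q_p)$ and its $x$-derivative $D=g_{\theta_w}(q_p)G_{\theta_s}(q_p)+G_{\theta_w}(q_p)g_{\theta_s}(q_p)$ is strictly positive when $q_p$ is interior to both supports. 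The paper instead factors $\varphi=\psi\circ\eta$ through the function space $\ell^\infty(I)$, where $\eta(\theta_1,\theta_2)=G_{\theta_1}\cdot G_{\theta_2}$ and $\psi(g)=g^{-1}(p)$, proves Hadamard differentiability of $\eta$ via a separate equicontinuity lemma (Lemma \ref{lem:hadamard_cdf}), invokes \citet[Lem. 3.9.20]{Van2000} for the Hadamard differentiability of the inverse map $\psi$, and then uses the chain rule and the (functional) delta method; the resulting derivative $D\varphi_{(\theta_w,\theta_s)}$ coincides exactly with your gradient, so both arguments produce the same quadratic form $\nabla\varphi\,\mathrm{diag}(\Sigma_w,\Sigma_s)\,\nabla\varphi'$ and hence the stated $\sigma^2_{p;\theta_w,\theta_s}$. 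Your route is more elementary and self-contained, replacing the empirical-process machinery by classical calculus; what the paper's functional approach buys is reusable infrastructure (differentiability of $\theta\mapsto G_\theta$ as a map into $\ell^\infty(I)$ and of the inversion map), which would extend more directly to process-level statements, e.g. uniformity over a range of quantile levels $p$. One point you correctly flag but should make explicit in a final write-up: positivity of $D$ does not follow from strict monotonicity of $F$ alone, but it does follow here because $p\in(0,1)$ forces $q_p$ to lie strictly inside the supports of both GEV components, where their densities are positive; the paper's proof needs, and implicitly uses, the same regularity when it divides by $f(F^{-1}(p))$.
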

\begin{proof}
Let $p\in(0,1)$ be fixed and let $\varphi:\Theta\times\Theta\subset\mathbb{R}^6\rightarrow\mathbb{R}$ denote the map $(\theta_1,\theta_2)\mapsto \left(G_{\theta_1}\cdot G_{\theta_2}\right)^{-1}(p)$. We are going to prove that $\varphi$ is a differentiable map in $(\theta_w,\theta_s)$. \\
Let $J$ denote the support of $F$ and let $I\subset J$ be some compact interval with interior point $q_p\in I$. Note that $\varphi$ can be decomposed into $\varphi=\psi\circ\eta$ with
\begin{align*}
\eta:\Theta\times\Theta\rightarrow\ell^\infty(I),~(\theta_1,\theta_2)\mapsto G_{\theta_1}\cdot G_{\theta_2}~\text{ and }~\psi:\ell^\infty(I)\rightarrow\mathbb{R},~g\mapsto g^{-1}(p).
\end{align*}
By Lemma \ref{lem:hadamard_cdf} below and a simple multivariate extension of \citet[Lem. 12.2]{Kos08}, it is straightforward to verify that $\eta$ is Hadamard differentiable in $(\theta_w,\theta_s)$ with derivative $D\eta_{(\theta_w,\theta_s)}:\mathbb{R}^6\rightarrow \ell^\infty(I)$,
\begin{align*}
\mathbf{a}\mapsto\left(G_{\theta_s}\cdot J_{\theta_w},~G_{\theta_w}\cdot J_{\theta_s}\right)\times \mathbf{a},~\mathbf{a}\in \mathbb{R}^6,
\end{align*}
where $J(\theta,x)=\frac{\partial}{\partial\theta}G_\theta(x)\in\mathbb{R}^{1\times 3}$ is the Jacobi matrix of $G_\theta(x)$ in $\theta$ and where $\times$ denotes matrix multiplication. By \citet[Lem. 3.9.20]{Van2000}, $\psi$ is Hadamard differentiable in $F=\eta(\theta_w,\theta_s)\in\ell^\infty(I)$ with derivative $D\psi_F:\ell^\infty(I)\rightarrow\mathbb{R}$,
\begin{align*}
h\mapsto-\frac{h(F^{-1}(p))}{f(F^{-1}(p))},~h\in\ell^\infty(I),
\end{align*}
where $f\in\ell^\infty(I)$ is the derivative of $F$. \\
Since both, $\eta$ and $\psi$, are Hadamard differentiable, so is the composition $\varphi=\psi\circ\eta$. But this is equivalent to $\varphi$ being differentiable in the usual sense \citep{Van2000}. The chain rule is also valid for Hadamard differentiability, which implies that the Jacobi matrix $D\varphi_{(\theta_w,\theta_s)}$ is given by
\begin{align*}
D\varphi_{(\theta_w,\theta_s)}=D\psi_{F}\circ D\eta_{(\theta_w,\theta_s)}=-\frac{\left(G_{\theta_s}(q_p)\cdot J(\theta_w,q_p),~G_{\theta_w}(q_p)\cdot J(\theta_s,q_p)\right)}{g_{\theta_w}(q_p)\cdot G_{\theta_s}(q_p)+G_{\theta_w}(q_p)\cdot g_{\theta_s}(q_p)}\in\mathbb{R}^{1\times 6}.
\end{align*}
Recall that $F=\eta(\theta_w,\theta_s)$, $q_p=F^{-1}(p)$ and $f(x)=g_{\theta_w}(x)\cdot G_{\theta_s}(x)+G_{\theta_w}(x)\cdot g_{\theta_s}(x)$. By the delta method and for $n\rightarrow\infty$, we finally conclude that
\begin{align*}
\sqrt{n}(\hat{q}_p-q_p)=\sqrt{n}\left(\varphi(\hat{\theta}_w,\hat{\theta}_s)-\varphi(\theta_w,\theta_s)\right)\stackrel{D}{\longrightarrow}\mathcal{N}\left(0,~\sigma^2_{p;\theta_w,\theta_s}\right),
\end{align*}
where
\begin{align*}
\sigma^2_{p;\theta_w,\theta_s}=D\varphi_{(\theta_w,\theta_s)}\cdot\begin{pmatrix}
\Sigma_w & \\
 & \Sigma_s
\end{pmatrix}\cdot (D\varphi_{(\theta_w,\theta_s)})'.
\end{align*}
\end{proof}

\begin{lemma}\label{lem:hadamard_cdf}
Let $I_{0}$ denote the support of $G_{\theta_0}$ and let $I\subset I_0$ be some compact interval. The map $\kappa:\Theta\rightarrow\ell^\infty(I)$, $\theta\mapsto G_\theta$, is Hadamard-differentiable in $\theta_0$ with derivative $D\kappa_{\theta_0}:\Theta\rightarrow\ell^\infty(I)$, $\mathbf{a}\mapsto J_{\theta_0}\mathbf{a}$, where $J_\theta(x)=\frac{\partial}{\partial\theta}G_\theta(x)$, $x\in I$.
\end{lemma}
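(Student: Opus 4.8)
The plan is to prove Hadamard differentiability directly from its sequential characterization. Since the domain $\Theta\subset\mathbb{R}^3$ is finite-dimensional, it suffices to show that for every scalar sequence $t_n\to0$ and every $\mathbf{a}_n\to\mathbf{a}$ in $\mathbb{R}^3$ with $\theta_0+t_n\mathbf{a}_n\in\Theta$, one has
\begin{align*}
\Big\|\tfrac{1}{t_n}\big(G_{\theta_0+t_n\mathbf{a}_n}-G_{\theta_0}\big)-J_{\theta_0}\mathbf{a}\Big\|_{\ell^\infty(I)}=\sup_{x\in I}\Big|\tfrac{G_{\theta_0+t_n\mathbf{a}_n}(x)-G_{\theta_0}(x)}{t_n}-J_{\theta_0}(x)\mathbf{a}\Big|\longrightarrow0.
\end{align*}
The candidate derivative $\mathbf{a}\mapsto J_{\theta_0}\mathbf{a}$ is precisely the pointwise $\theta$-gradient of $G_\theta(x)$ at $\theta_0$, so the entire content of the lemma is that this pointwise Taylor expansion holds \emph{uniformly} over the compact interval $I$.

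First I would fix $x\in I$ and apply the fundamental theorem of calculus to the real-valued, continuously differentiable map $s\mapsto G_{\theta_0+s t_n\mathbf{a}_n}(x)$ on $[0,1]$, giving the integral form of the mean value theorem
\begin{align*}
\frac{G_{\theta_0+t_n\mathbf{a}_n}(x)-G_{\theta_0}(x)}{t_n}=\int_0^1 J_{\theta_0+s t_n\mathbf{a}_n}(x)\,\mathbf{a}_n\,ds.
\end{align*}
Subtracting the candidate derivative and splitting the integrand yields
\begin{align*}
\frac{G_{\theta_0+t_n\mathbf{a}_n}(x)-G_{\theta_0}(x)}{t_n}-J_{\theta_0}(x)\mathbf{a}=\int_0^1\big[J_{\theta_0+s t_n\mathbf{a}_n}(x)-J_{\theta_0}(x)\big]\mathbf{a}_n\,ds+J_{\theta_0}(x)\big(\mathbf{a}_n-\mathbf{a}\big).
\end{align*}
The second term is bounded by $\|\mathbf{a}_n-\mathbf{a}\|\cdot\sup_{x\in I}\|J_{\theta_0}(x)\|$, which tends to $0$ because $J_{\theta_0}$ is continuous, hence bounded, on the compact set $I$ and $\mathbf{a}_n\to\mathbf{a}$.

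The crux, and the step I expect to be the main obstacle, is the first term. Taking the supremum over $x\in I$, it is dominated by $\|\mathbf{a}_n\|\cdot\sup_{x\in I}\sup_{s\in[0,1]}\big\|J_{\theta_0+s t_n\mathbf{a}_n}(x)-J_{\theta_0}(x)\big\|$, and since $\|\mathbf{a}_n\|$ stays bounded, everything reduces to showing that this double supremum vanishes. Here I would invoke the \emph{joint} continuity of $(\theta,x)\mapsto J_\theta(x)$ together with compactness. One checks from the explicit form of $G_\theta$ that each partial derivative $\partial_\theta G_\theta(x)$ is jointly continuous in $(\theta,x)$ wherever $1+\xi(x-\mu)/\sigma>0$, i.e.\ in the interior of the support. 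The delicate point is that the support $I_0$ of $G_{\theta_0}$, and the region where $J_\theta$ is regular, depend on $\theta$: for $\xi\neq0$ the GEV has a finite endpoint at which $1+\xi(x-\mu)/\sigma$ vanishes and $J_\theta$ degenerates. I would therefore take $I$ to be a compact interval contained in the \emph{interior} of $I_0$; writing $I_\theta$ for the support of $G_\theta$, continuity of the endpoint in $\theta$ then supplies a closed ball $\overline{B}(\theta_0,\delta)\subset\Theta$ with $I\subset\operatorname{int}(I_\theta)$ for every $\theta\in\overline{B}(\theta_0,\delta)$, so that $(\theta,x)\mapsto J_\theta(x)$ is continuous on the compact product $\overline{B}(\theta_0,\delta)\times I$ and hence uniformly continuous there. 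Since $s\,t_n\mathbf{a}_n\to0$ uniformly in $s\in[0,1]$, for large $n$ the argument $\theta_0+s\,t_n\mathbf{a}_n$ lies in $\overline{B}(\theta_0,\delta)$, and uniform continuity forces the double supremum to $0$. Combining the two terms gives the claimed convergence and establishes Hadamard differentiability with derivative $\mathbf{a}\mapsto J_{\theta_0}\mathbf{a}$.
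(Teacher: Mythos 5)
Your proof is correct, but it takes a genuinely different route from the paper's. The paper establishes the required uniform convergence by an Arzel\`a--Ascoli-type argument: pointwise convergence of the difference quotients $f_t=t^{-1}\bigl(G_{\theta_0+t\mathbf{a}_t}-G_{\theta_0}\bigr)$ plus equicontinuity of the family $\{f_t\}$, where equicontinuity comes from the mean value theorem in the \emph{$x$-variable}, so that one must uniformly bound $f'_t(x)=t^{-1}\bigl[g_{\theta_0+t\mathbf{a}_t}(x)-g_{\theta_0}(x)\bigr]$ --- an argument that works with the densities $g_\theta$ and, in the limit $t\to0$, with the mixed derivative $\frac{\partial}{\partial\theta}g_\theta(x)$. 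You instead apply the mean value theorem in the \emph{$\theta$-variable} (in integral form along the segment from $\theta_0$ to $\theta_0+t_n\mathbf{a}_n$) and reduce everything to uniform continuity of $(\theta,x)\mapsto J_\theta(x)$ on a compact product $\overline{B}(\theta_0,\delta)\times I$. Your route is more economical in regularity: it needs only joint continuity of the first $\theta$-derivative of $G_\theta(x)$, whereas the paper's equicontinuity step implicitly requires second-order smoothness (the $\theta$-derivative of the density) and a uniform density bound ($|g_\theta(x)|\le\sigma^{-1}|1+\xi|$, which incidentally fails for large $\xi$ and for $\xi<-1$ near the upper endpoint). Your treatment is also more explicit about the one genuine subtlety, namely that the support of $G_\theta$ moves with $\theta$: you require $I$ to lie in the \emph{interior} of $I_0$, so that $J_\theta$ is well defined and jointly continuous on a full parameter neighborhood. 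This is formally a mild strengthening of the lemma's hypothesis (which allows $I$ to touch the support endpoint), but it is exactly the version needed in the theorem's proof, where $I$ is a compact interval around the interior point $q_p$; at a finite endpoint the map $\theta\mapsto G_\theta(x)$ can even fail to be differentiable (e.g.\ for $\xi_0<-1$ at the upper endpoint), so some interiority restriction is unavoidable --- a point the paper's own proof glosses over.
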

\begin{proof}
Hadamard differentiability is defined as a particular kind of uniform convergence of the difference quotient of $\kappa$. We have to prove that 
\begin{align}\label{eq:uniform_conv_diff}
\left\|\frac{\kappa(\theta_0+t\mathbf{a}_t)-\kappa(\theta_0)}{t}-D\kappa_{\theta_0}(\mathbf{a})\right\|\longrightarrow 0~\text{ for }~t\rightarrow 0
\end{align}
and any $\theta_0+t\mathbf{a}_t\in\Theta$ with $\mathbf{a}_t\rightarrow\mathbf{a}$ for $t\rightarrow 0$. We consider the norm $\|f\|=\sup_{x\in I}|f(x)|$ on $\ell^\infty(I)$. It is a known fact that uniform convergence on compact sets follows by point-wise convergence and by equicontinuity. \\ 
Note that point-wise convergence, i.e., convergence in \eqref{eq:uniform_conv_diff} for fixed $x\in I$ follows by the differentiability of the map $\theta\mapsto G_\theta(x)$ in $\theta_0$. Since $I$ is a compact set, it suffices to prove equicontinuity of the family $\{f_t\in\ell^\infty(I):~ t\in[-1,0)\cup(0,1]\}$ with
\begin{align*}
f_t(x)=\frac{G_{\theta_0+t\mathbf{a}_t}(x)-G_{\theta_0}(x)}{t},~~x\in I.
\end{align*}
\textsl{Proof of equicontinuity:} Let $\varepsilon>0$ and $x_1,x_2\in I$ with $|x_1-x_2|<\delta$ ($\delta$ to be defined later on). By the mean value theorem there exists a real number $x_0$ between $x_1$ and $x_2$, such that
\begin{align*}
|f_t(x_1)-f_t(x_2)|=\left|f'_t(x_0)\right|\cdot |x_1-x_2|,
\end{align*}
where $f'_t(x)=\frac{\partial}{\partial x}f_t(x)=t^{-1}\left[g_{\theta_0+t\mathbf{a}_t}(x)-g_{\theta_0}(x)\right]$ and $g_\theta$ is the density of $G_\theta$. If we show that $\left|f'_t(x_0)\right|\leq c$ is bounded uniformly for all $x_0\in I$ and $t\in[-1,1]\backslash\{0\}$, we can set $\delta=\varepsilon/c$ and the proof of the equicontinuity is complete. \\
\textsl{Proof of boundedness:} Note first that the density $g_\theta$ with $\theta=(\mu,\sigma,\xi)\in\Theta$ is bounded by $|g_\theta(x)|\leq\sigma^{-1}|1+\xi|$. Without loss of generality we may assume that $\theta_0+t\mathbf{a}_t$ is bounded uniformly in $t\in[-1,1]\backslash\{0\}$, since $\mathbf{a}_t\rightarrow\mathbf{a}$ for $t\rightarrow0$. This allows us to find a bound for $|f'_t(x)|$ uniformly in $t\in[-1,-a]\cup[a,1]$ and $x\in I$, where $0<a<1$ is some constant. On the other hand (for $t$ close to zero), note that $\lim_{t\rightarrow 0}f'_t(x)=\frac{\partial}{\partial\theta}g_\theta(x)|_{\theta=\theta_0}\mathbf{a}$ and that the limit is uniformly bounded for $x\in I$. Summing up, we conclude that $|f'_t(x)|$ is bounded uniformly for $t\in(-a, a)\backslash\{0\}$. This completes the proof of the lemma.
\end{proof}

\subsection{Semi-parametric inference}\label{sec:unimath}

\subsubsection{Introduction: Hill's estimator and Weissman's extrapolation formula}

Let $X$ be a random variable with distribution function $F(x)=\mathbb{P}(X\leq x)$ and let $u>0$ be a real positive number. The random variable $X/u$ satisfying $X>u$ is called relative excess over the threshold $u$. Then, from \eqref{eq:paretotypedist}, we obtain
\begin{align}
\mathbb{P}\left(\frac{X}{u}\leq x~\Big|~ X>u\right)=1-\frac{\bar{F}(ux)}{\bar{F}(u)}\rightarrow 1-x^{-1/\gamma}\ \text{ for }\ u\rightarrow\infty, \nonumber
\end{align}
which means that relative excesses over $u$ approximately follow a parametric distribution for large $u$. The limit $P_\gamma(x)=1-x^{-1/\gamma}$, $x\geq1$, is called Pareto distribution function with $\gamma>0$ being a shape parameter. It is straightforward to verify that a random variable $Z$ with distribution function $P_\gamma$ fulfills $\gamma=\mathbb{E}[\log Z]$, which suggests to estimate $\gamma$ from arithmetic means of $\log$-transformed excesses: \\
Let $X_1,\ldots,X_n$ be independent copies of $X$ with ordered values denoted by $X_{(1)}\leq\ldots\leq X_{(n)}$. Hill's estimator of $\gamma$ proposed in \citet{Hil75} is defined by
\begin{align}\label{eq:hillest}
H_{k,n}=\frac{1}{k}\sum_{i=1}^k\log\left(\frac{X_{(n-i+1)}}{u_{k,n}}\right)\ \text{with}\ u_{k,n}=X_{(n-k)},
\end{align}
for integers $1\leq k<n$ representing the number of excesses over the random threshold $u_{k,n}$. The popularity of Hill's estimator can be explained by its computational simplicity and by the fact that it can be considered as a maximum pseudo-likelihood estimator. \\
Now, we choose $\hat{\gamma}=H_{k,n}$ as Hill's estimator. In the next step, we estimate quantiles $F^{-1}(p)$ with an extrapolation formula of \citet{Wei78}, which is motivated by relation \eqref{eq:Uchar} and given by 
\begin{align}\label{eq:weissmanextrapolation}
\hat{F}^{-1}(p)=u_{k,n}\cdot\left(\frac{k}{n(1-p)}\right)^{\hat{\gamma}}.
\end{align}
The first term $u_{k,n}=X_{(n-k)}$ is interpreted as a non-parametric estimator of a moderate quantile $F^{-1}(1-k/n)=U(n/k)$, while the remainder is used for the extrapolation into the tail region $p>1-k/n$, with $\hat{\gamma}$ controlling the extrapolation width.

\subsubsection{Limit theorems for regional semi-parametric estimation}\label{sec:semi_theory}

The following results are built upon technical assumptions, that, however, can be checked only if detailed information on the tail region of the joint distribution is available. Since our observation lengths are very limited, we omit these details and refer to the assumptions stated in \citet[Section 2]{KinFriLil15} and \citet[Theorem 4.3.8]{DehFer06}. \\
The following statistics are computed from observations of scheme \eqref{eq:sample}, where $a_j=\lfloor n(1-r_j)\rfloor$ such that $n_j/n\rightarrow r_j>0$ for $n\rightarrow\infty$ and with random vectors $\mathbf{X}_i=(X_{i,1},\ldots,X_{i,d})'$ having marginal distribution functions $F_j=\mathbb{P}(X_{i,j}\leq x)$ of Pareto-type. The random vector $\mathbf{H}_{\mathbf{k},\bm{r},n}$ is a collection of local Hill-estimators, with $j$-th component denoted by $H_{k_j,r_j,n}$ computed from the $k_j$ largest order statistics of the sample $X_{a_j+1,j},X_{a_j+2,j},\ldots,X_{n,j}$.

\begin{proposition}\label{prop:regionalEVI}\textsl{(Proposition 1 in \citet{KinFriLil15})} \\
Suppose that $\gamma_1=\ldots=\gamma_d=\gamma$, $k_1\rightarrow\infty$, $k_1/n\rightarrow0$ and $k_1/k_\ell\rightarrow c_\ell$ hold for $n\rightarrow\infty$ and finite values $c_\ell>0$, $\ell=2,\ldots,d$. Let $\mathbf{1}=(1,\ldots,1)'\in\mathbb{R}^d$. Then, under additional technical assumptions \citep[Sec. 2]{KinFriLil15} and for $n\rightarrow\infty$, we have that
\begin{align}
\sqrt{k_1}\left(\mathbf{H}_{\mathbf{k},\bm{r},n}-\gamma\mathbf{1}\right)\stackrel{D}{\longrightarrow}\mathcal{N}\left(0,\gamma^2\cdot\Sigma\right) \nonumber
\end{align}
holds, where $\Sigma\in\mathbb{R}^{d\times d}$ is defined componentwise by
\begin{align}
\Sigma_{l,m} = c_l\cdot c_m\cdot(r_l\wedge r_m)\cdot\Lambda_{l,m}\left((r_lc_l)^{-1},(r_mc_m)^{-1}\right) \nonumber
\end{align}
for $1\leq l, m\leq d$ and where $x\wedge y=\min(x,y)$. For $l=m$ this reduces to $\Sigma_{l,l}=c_l$.
\end{proposition}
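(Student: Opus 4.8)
The plan is to reduce the vector of local Hill estimators to a functional of a joint tail empirical process and then invoke a functional central limit theorem, transferring the Gaussian limit via the delta method. The single-site theory (the template being \citet[Theorem 4.3.8]{DehFer06}) handles the marginal behaviour; the genuinely new ingredient is the joint behaviour across the $d$ dependent sites observed under the staggered scheme \eqref{eq:sample}.

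First I would fix attention on one site $j$ and recall the standard representation. Writing the top $k_j$ log-excesses through the tail empirical quantile process and using the Pareto-type assumption \eqref{eq:pareto-type-regional}, the Hill estimator admits a linearization of the form
\begin{align*}
\sqrt{k_j}\bigl(H_{k_j,r_j,n}-\gamma\bigr)=\gamma\int_0^1 \mathbb{W}_{j,n}(s)\,\frac{ds}{s}+\sqrt{k_j}\,b_{j,n}+o_P(1),
\end{align*}
where $\mathbb{W}_{j,n}$ is the centred, normalized tail empirical process of site $j$ and $b_{j,n}$ a second-order bias term. The technical assumptions from \citet[Sec.~2]{KinFriLil15} are precisely a second-order regular variation condition together with $\sqrt{k_1}A(n/k_1)\to0$, which force $\sqrt{k_j}\,b_{j,n}\to0$; this disposes of the bias and leaves a purely stochastic limit.

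The crux is the \emph{joint} weak convergence of the vector process $(\mathbb{W}_{1,n},\ldots,\mathbb{W}_{d,n})$. Marginally each $\mathbb{W}_{j,n}$ converges to a standard Brownian-type limit, which after applying the functional $\int_0^1(\cdot)\,ds/s$ reproduces the diagonal entry $\Sigma_{l,l}=c_l$ (the classical variance $\gamma^2$ rescaled by $k_1/k_l\to c_l$). The off-diagonal covariances are governed by two effects that must be combined: the overlap of the two observation windows, contributing the factor $(r_l\wedge r_m)$ because only the $\min(n_l,n_m)$ commonly observed years can produce simultaneous exceedances, and the inter-site tail dependence, encoded by the function $\Lambda_{l,m}$ evaluated at the relative tail levels $((r_lc_l)^{-1},(r_mc_m)^{-1})$. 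I would establish this by a multivariate tail empirical process argument: expand the cross-covariance of the two processes into an expected count of joint exceedances over the shared time indices, normalize by $\sqrt{k_lk_m}$ (rewritten through $c_l,c_m$ relative to $k_1$), and pass to the limit using multivariate regular variation so that the joint exceedance probability converges to $\Lambda_{l,m}$.

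Finally, since the map from the tail quantile process to the Hill statistic is linear, hence Hadamard differentiable as a $\log$-integral functional, the functional delta method carries the joint Gaussian limit of $(\mathbb{W}_{1,n},\ldots,\mathbb{W}_{d,n})$ directly to $\sqrt{k_1}(\mathbf{H}_{\mathbf{k},\bm{r},n}-\gamma\mathbf{1})$, and reading off the covariance kernel yields $\gamma^2\Sigma$ with the stated entries. The main obstacle is the joint process convergence of the middle step: verifying tightness of the $d$-dimensional tail empirical process and, above all, identifying the limiting cross-covariance as exactly $c_lc_m(r_l\wedge r_m)\Lambda_{l,m}$ under the staggered scheme, since it is the interplay between the differing sample lengths (the $r_l\wedge r_m$ overlap) and the asymptotic tail dependence (the $\Lambda_{l,m}$) that distinguishes this regional result from the classical univariate one.
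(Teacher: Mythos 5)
The paper offers no proof of this proposition at all: it is imported verbatim from \citet{KinFriLil15} (``Proposition 1''), with even the technical assumptions deferred to Section~2 of that reference, so the ``paper's proof'' is the citation itself. Your sketch reconstructs essentially the argument used in that cited source --- linearization of each Hill estimator via its tail empirical process, a second-order condition with $\sqrt{k_1}\,A(n/k_1)\to 0$ killing the bias, joint weak convergence across sites with the cross-covariance identified through the window overlap $(r_l\wedge r_m)$ and the tail copula $\Lambda_{l,m}$ at levels $((r_lc_l)^{-1},(r_mc_m)^{-1})$, then a linear-functional/delta-method step --- and your covariance heuristic is consistent (it correctly returns $\Sigma_{l,l}=c_l$ on the diagonal). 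The one caveat: what you flag as the ``main obstacle'' (tightness and covariance identification for the $d$-dimensional tail empirical process under the staggered scheme) is precisely the substance of the cited proof, so your proposal is a faithful outline of that route rather than a self-contained argument.
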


For the estimation of $\Sigma$ we plug in $\hat{c}_{\ell}=\frac{k_1}{k_\ell}$ and a consistent estimator $\hat{\Lambda}_{\ell,m}$ of the upper tail dependence copula $\Lambda_{\ell,m}$ between components $\ell$ and $m$. For the latter, we apply two different solutions: The empirical estimator studied in \citet{schmidt_non-parametric_2006} or, under the extreme value dependence assumption, we use representation
\begin{align}
\Lambda_{\ell,m}(x,y)=(x+y)\cdot\left[1-A_{\ell,m}\left(\frac{y}{x+y}\right)\right] \nonumber
\end{align}
with corresponding Pickands function $A_{\ell,m}$ and plug in the corrected CFG-estimator $\hat{A}_{\ell,m}$ studied in \citet{genest_rank-based_2009}. As a direct consequence of Proposition \ref{prop:regionalEVI}, we have
\begin{align}
\sqrt{k_1}\left(\hat{\gamma}_{\mathbf{k},\bm{r},n}(\mathbf{w})-\gamma\right)\stackrel{D}{\rightarrow}\mathcal{N}(0,\gamma^2\mathbf{w}'\Sigma\mathbf{w}), \nonumber
\end{align}
where $\hat{\gamma}_{\mathbf{k},\bm{r},n}(\mathbf{w})=\mathbf{w}'\mathbf{H}_{\mathbf{k},\bm{r},n}$ is called regional estimator of $\gamma$.

\medskip

Let $\hat{F}_j^{-1}(p)$ denote the regional estimator of $F_j^{-1}(p)$ defined in \eqref{eq:jointquantile}.

\begin{proposition}\label{prop:weissmanconverg}\textsl{(we refer to \citet[Sec. 2.7]{Kin16})} \\
Let $1\leq j\leq d$ be fixed. Let $p=p_n\in(0,1)$ be a sequence of numbers such that $p\rightarrow1$, $n(1-p)/k_1\rightarrow0$ and $\log(np)/\sqrt{k_1}\rightarrow0$ hold for $n\rightarrow\infty$. Then, under the assumptions from Proposition \ref{prop:regionalEVI}, the same technical assumptions as in Proposition~\ref{prop:regionalEVI} and for $n\rightarrow\infty$, we have that
\begin{align}
\frac{\sqrt{k_j}}{\log\frac{k_j}{n_j(1-p)}}\left(\frac{\hat{F}_j^{-1}(p)}{F_j^{-1}(p)}-1\right)\stackrel{D}{\rightarrow}\mathcal{N}\left(0,\frac{\gamma^2}{c_j}\mathbf{w}'\Sigma\mathbf{w}\right), \nonumber
\end{align}
where $\hat{F}_j^{-1}(p)$ is defined in \eqref{eq:weissmanextrapolation}.
\end{proposition}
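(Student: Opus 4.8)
The result is the regional counterpart of the classical asymptotic normality of Weissman's extrapolation estimator (\citet[Theorem 4.3.8]{DehFer06}); the only genuinely new ingredient is that the tail index is the regional estimator $\hat{\gamma}(\mathbf{w})$ from Proposition \ref{prop:regionalEVI}. The plan is to pass to the logarithmic scale and reduce everything to that proposition. Writing $U_j(t)=F_j^{-1}(1-1/t)$ and abbreviating the extrapolation factor by $d_j=\frac{k_j}{n_j(1-p)}$, so that $\log d_j=\log\frac{k_j}{n_j(1-p)}$, I would first use $F_j^{-1}(p)=U_j(1/(1-p))$ together with the identity $1/(1-p)=(n_j/k_j)\,d_j$ to split the ratio into three factors,
\begin{align*}
\frac{\hat{F}_j^{-1}(p)}{F_j^{-1}(p)}=\underbrace{\frac{\hat{u}_j}{U_j(n_j/k_j)}}_{=:A_n}\cdot\underbrace{\frac{U_j(n_j/k_j)}{U_j((n_j/k_j)d_j)}\,d_j^{\gamma}}_{=:B_n}\cdot\underbrace{d_j^{\,\hat{\gamma}(\mathbf{w})-\gamma}}_{=:C_n}.
\end{align*}
Taking logarithms yields the additive decomposition
\begin{align*}
\frac{\sqrt{k_j}}{\log d_j}\log\frac{\hat{F}_j^{-1}(p)}{F_j^{-1}(p)}=\frac{\sqrt{k_j}}{\log d_j}\log A_n+\frac{\sqrt{k_j}}{\log d_j}\log B_n+\sqrt{k_j}\bigl(\hat{\gamma}(\mathbf{w})-\gamma\bigr),
\end{align*}
and the whole argument reduces to showing that the first two summands are $o_P(1)$ while the third carries the limit.

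I would then treat the three terms in turn. The factor $A_n$ measures the relative error of the intermediate order statistic $\hat{u}_j=X_{(n_j-k_j)}$ against the moderate quantile $U_j(n_j/k_j)$; by the standard asymptotics of intermediate order statistics (valid under the imported technical assumptions) one has $\sqrt{k_j}(A_n-1)=O_P(1)$, hence $\log A_n=O_P(k_j^{-1/2})$ and $\frac{\sqrt{k_j}}{\log d_j}\log A_n=O_P(1/\log d_j)=o_P(1)$. Here $\log d_j\to\infty$, because the hypotheses $n(1-p)/k_1\to0$ and $k_1/k_j\to c_j$ force $d_j\to\infty$. A pleasant consequence is that, since the $A_n$-contribution is asymptotically negligible, one need not determine the joint limiting law of $\hat{u}_j$ and $\hat{\gamma}(\mathbf{w})$. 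The term $B_n$ is deterministic and, by the regular-variation relation \eqref{eq:Uchar} applied with $t=n_j/k_j$ and $x=d_j$, satisfies $B_n\to1$; it is the bias contribution. The dominant third term is handled directly by Proposition \ref{prop:regionalEVI}: since $k_j/k_1\to1/c_j$,
\begin{align*}
\sqrt{k_j}\bigl(\hat{\gamma}(\mathbf{w})-\gamma\bigr)=\sqrt{k_j/k_1}\cdot\sqrt{k_1}\bigl(\hat{\gamma}(\mathbf{w})-\gamma\bigr)\stackrel{D}{\longrightarrow}\mathcal{N}\!\left(0,\tfrac{\gamma^2}{c_j}\mathbf{w}'\Sigma\mathbf{w}\right),
\end{align*}
which is exactly the claimed limiting law.

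The main obstacle is the bias term $\frac{\sqrt{k_j}}{\log d_j}\log B_n$: mere convergence $B_n\to1$ does not suffice, and one must verify that this rescaled quantity vanishes. This is where the second-order regular-variation condition and the rate restrictions on $p=p_n$ enter. Using a second-order expansion of $U_j$ (the technical assumptions imported from \citet[Theorem 4.3.8]{DehFer06} and \citet[Sec. 2]{KinFriLil15}), $\log B_n$ is, up to smaller terms, of the size of the auxiliary function $A(n_j/k_j)$ multiplied by a factor growing like $\log d_j$; the conditions $n(1-p)/k_1\to0$ and $\log(np)/\sqrt{k_1}\to0$ are calibrated precisely so that, after multiplication by $\sqrt{k_j}/\log d_j$, the resulting expression tends to zero. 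With the two negligible terms discarded, the normalised log-ratio inherits the weak limit of $\sqrt{k_j}(\hat{\gamma}(\mathbf{w})-\gamma)$. It remains only to transfer the statement from the logarithm to the ratio itself: consistency gives $\hat{F}_j^{-1}(p)/F_j^{-1}(p)\stackrel{P}{\longrightarrow}1$, so the expansion $\log(1+x)=x+o(x)$ shows that $\frac{\sqrt{k_j}}{\log d_j}\log\frac{\hat{F}_j^{-1}(p)}{F_j^{-1}(p)}$ and $\frac{\sqrt{k_j}}{\log d_j}\bigl(\frac{\hat{F}_j^{-1}(p)}{F_j^{-1}(p)}-1\bigr)$ have the same weak limit, which completes the proof.
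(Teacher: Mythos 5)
Your proof is correct and is essentially the paper's own argument: the paper never proves Proposition \ref{prop:weissmanconverg} in the text but defers it to \citet[Sec.~2.7]{Kin16}, i.e.\ to an adaptation of the classical proof of \citet[Thm.~4.3.8]{DehFer06}, and that is exactly your three-factor decomposition (intermediate order statistic $A_n$, deterministic second-order bias term $B_n$, extrapolation term $C_n=d_j^{\hat{\gamma}(\mathbf{w})-\gamma}$), with the sole regional novelty being that $\sqrt{k_j}\bigl(\hat{\gamma}(\mathbf{w})-\gamma\bigr)=\sqrt{k_j/k_1}\cdot\sqrt{k_1}\bigl(\hat{\gamma}(\mathbf{w})-\gamma\bigr)$ inherits its limit $\mathcal{N}\bigl(0,\tfrac{\gamma^2}{c_j}\mathbf{w}'\Sigma\mathbf{w}\bigr)$ from Proposition \ref{prop:regionalEVI}, precisely as you argue. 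One minor remark: your final log-to-ratio linearization requires $\log d_j/\sqrt{k_j}\to 0$, which follows from de Haan--Ferreira's condition $\log(n(1-p))/\sqrt{k_1}\to 0$ rather than literally from the paper's stated ``$\log(np)/\sqrt{k_1}\to 0$''; your proof implicitly (and reasonably) reads the hypothesis in that intended form.
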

Proposition \ref{prop:weissmanconverg} allows us to derive an asymptotic $(1-\alpha)$-confidence interval 
\begin{align}\label{eq:ci}
CI^{(1)}(\alpha)=\hat{F}_j^{-1}(p)\cdot\left[1\pm z_{1-\alpha/2}\cdot\sqrt{\frac{\hat{\gamma}_{\mathbf{k},\bm{r},n}^2}{k_1}\mathbf{w}'\hat{\Sigma}\mathbf{w}}\cdot\log\left(\frac{k_j}{n_j(1-p)}\right)\right],
\end{align}
where $z_{1-\alpha/2}$ is the $(1-\alpha/2)$-quantile of the standard normal distribution and with $a\cdot(1\pm b)$ denoting the interval $[a\cdot(1-b),\ a\cdot(1+b)]$ for $a\in\mathbb{R}$, $b>0$. 
%Let us now switch to the estimation of tail probabilities $p=F_j(x)$ for a given large value $x$ and with estimator $\hat{F}_j(x)$ defined in \eqref{eq:tailapproxireg}. Under the same technical assumptions as before, with $x=x_n=F^{-1}(p_n)$, we obtain from a slight modification of Theorem 4.4.7 in \citet{DehFer06}
%\begin{align}
%\frac{\gamma \sqrt{k_j}}{(1-F(x))\log\frac{k_j}{n_j(1-p)}}\left(\frac{\hat{F}_j(x)}{F_j(x)}-1\right)\stackrel{D}{\rightarrow}\mathcal{N}\left(0,\frac{\gamma^2}{c_j}\mathbf{w}'\Sigma\mathbf{w}\right) \nonumber
%\end{align}
%for $n\rightarrow\infty$. This result can be used to construct asymptotically valid confidence intervals for high tail probabilities $F_j(x)$.

\bibliographystyle{chicago}
\bibliography{biblio}

\end{document}